\documentclass[lettersize,onecolumn]{IEEEtran}
\usepackage{amsmath,amsfonts}
\usepackage{algorithmic}
\usepackage{algorithm}
\usepackage{array}
\usepackage[caption=false,font=normalsize,labelfont=sf,textfont=sf]{subfig}
\usepackage{textcomp}
\usepackage{stfloats}
\usepackage{url}
\usepackage{verbatim}
\usepackage{graphicx}
\usepackage{mathtools}
\usepackage{bm}
\usepackage{mathrsfs}
\usepackage{booktabs}
\usepackage{array}
\usepackage{tikz}
\usepackage{cite}
\newtheorem{theorem}{Theorem}
\newtheorem{lemma}[theorem]{Lemma}
\newtheorem{proposition}[theorem]{Proposition}
\newtheorem{corollary}[theorem]{Corollary}
\newtheorem{definition}[theorem]{Definition}
\newtheorem{remark}[theorem]{Remark}

\begin{document}

\title{A Complete Characterization of Tensorizable $f$-divergences}

\author{
Rodrigo Cruz,
Flavio P. Calmon 
and Qian Yu 
\thanks{Rodrigo Cruz and Flavio P. Calmon are with the School of Engineering and Applied Sciences, Harvard University, Cambridge, MA USA
(e-mail: ccruzflores@g.harvard.edu; flavio@seas.harvard.edu).}%
\thanks{Qian Yu is with the Department of Electrical and Computer Engineering, University of California at Santa Barbara, Santa Barbara, CA USA
(e-mail: qianyu02@ucsb.edu).}%
\thanks{Corresponding author: Rodrigo Cruz}%
}

\maketitle

\begin{abstract}
 Csiszar's formulation of the $f$-divergence introduced a vast family of functionals for quantifying dissimilarity between probability distributions. However, many applications in statistics and information theory rely only on a few $f$-divergences, such as the Kullback-Leibler divergence, the $\chi^2$-divergence, and the squared Hellinger distance. These divergences are especially useful because they admit simple compositional formulas under product measures, a property sometimes referred to as \emph{tensorization}. In this work, we refine a formalism of tensorization previously introduced in the literature. Then, we show that any possible tensorization formula has a multi-affine form characterized by a single parameter, and identify all tensorizable $f$-divergences under our adopted notion of tensorization.  
\end{abstract}

\begin{IEEEkeywords}
$f$-divergence, tensorization, information measures, functional equations.
\end{IEEEkeywords}

\section{Introduction}
Measuring discrepancy between probability distributions is central to information theory and statistics. Csisz\'ar's $f$-divergence \cite{csiszar1967information} provides a rich family of functionals for quantifying this discrepancy. Despite the myriad of $f$-divergences, a particular subset repeatedly appears in theory and practice: \emph{tensorizable} $f$-divergences (e.g. \cite{kakutani_equivalence_1948, kraft_conditions_1955, yang_information-theoretic_1999,li2016renyi, nowozin2016f,  mironov2017renyi, balle2020hypothesis}). Intuitively, a tensorizable $f$-divergence evaluated at a pair of product measures can be decomposed into an expression that depends only on marginal discrepancies. Prominent examples of tensorizable $f$-divergences are the squared Hellinger distance and the Kullback-Leibler (KL) and $\chi^2$ divergences. Recently, \cite{CruzRodrigo2025Tof} formalized the notion of a tensorizable $f$-divergence $D_f$ as one for which there exists a sequence of continuous maps $\{\tau_n\}_{n = 1}^\infty$ such that
\[
D_f(P_1 \otimes \cdots \otimes P_n \Vert Q_1 \otimes \cdots \otimes Q_n) = \tau_n(D_f(P_1 \Vert Q_1), \dots, D_f(P_n \Vert Q_n)).
\]
These formulas keep tensorizable $f$-divergences tractable even for measures with many factors. As a result, they are especially useful in applications where the operationally meaningful discrepancy measure is either difficult to compute or difficult to compose under product measures. For example, in the Neyman-Pearson formulation of binary hypothesis testing, the total variation distance determines the optimal testing error rate \cite{LeCamLucien1986AMiS}. However, computing this metric is often complicated even for tests involving independent and identically distributed observations of well-behaved distributions. Thus, we rely on Pinsker or Kraft's inequalities to bound or estimate this quantity. Similarly, in $(\epsilon, \delta)$-differential privacy, the hockey-stick divergence gives the information leakage of a privacy-preserving mechanism \cite{balle2020hypothesis}. However, when private data is queried multiple times, quantifying the cumulative leakage can be challenging --- a problem known as privacy accounting \cite{mcsherry2009privacy, abadi2016deep, zhu2022optimal}. Accordingly, one of the most influential approaches to privacy accounting \cite{mironov2017renyi} relies on monotone transformations of Hellinger divergences of order $\alpha$, which are tensorizable $f$-divergences. 

Tensorizable $f$-divergences also appear prominently in online learning and stochastic optimization. The most common application is the linearization of KL divergences, often combined with standard information-theoretic tools such as Pinsker’s inequality, to derive matching optimality bounds. Such techniques underlie many foundational results, including instance-dependent and minimax regret bounds in multi-armed bandits \cite{10.1016/0196-8858, pmlr-v19-garivier11a, 10.5555/795662.796294, audibert:hal-00834882}, as well as minimax sample complexity characterizations in convex and non-convex stochastic optimization \cite{7055287, pmlr-v206-yu23a, NEURIPS2024_b444ad72}. More recently, a growing line of work has shown that approaches beyond local linearization can play a fundamental role in establishing sharp information-theoretic lower bounds and optimality guarantees in interactive decision-making and adaptive stochastic optimization problems \cite{pmlr-v195-foster23b, NEURIPS2024_8a23a95e}. 

\subsection{Contributions and Methods}

Despite their ubiquity, tensorizable $f$-divergences have not yet been formally characterized. We address this gap by building upon the notion of tensorization defined in \cite{CruzRodrigo2025Tof} to show that:
\begin{center}
\begin{tikzpicture}
\node[
  draw,
  rounded corners=3pt,
  line width=0.6pt,
  inner xsep=10pt,
  inner ysep=6pt,
  text width=0.85\linewidth
] {The only tensorizable $f$-divergences are either linear combinations (with nonnegative coefficients) of the KL and reverse KL divergences or power divergences (see Table~\ref{tab:f-divergences} for a list of common $f$-divergences). };
\end{tikzpicture}
\end{center}

To achieve this, we present three technical contributions: 
\begin{enumerate}
    \item We show that every admissible tensorization formula belongs to a one-parameter family of ``multi-affine'' functions. We call the parameter indexing this family the tensorization parameter. Our argument exploits how the Lebesgue integral behaves linearly with respect to convex combinations of measures. This forces every $f$-divergence to be affine when viewed as a function of log-likelihood ratio (LLR) distributions. Tensorization extends this affine structure, making the $f$-divergence on a pair of product measures depend linearly on each marginal discrepancy.

    \item We introduce the subclass of symmetrized tensorizable $f$-divergences and prove that each of its elements is uniquely characterized by two real numbers. Specifically, when restricted to two-point symmetric LLR distributions, a tensorizable $f$-divergence induces a quadratic functional equation whose solutions are characterized by the tensorization parameter and the quadratic rate at which the divergence vanishes as the support of the two-point LLR distribution converges to $\{0\}$. We call this quadratic rate the local characteristic of the $f$-divergence.

    \item We identify every convex function $f$ that induces a tensorizable $f$-divergence $D_f$. Using the fact that the tensorization parameter and local characteristic determine the form of a symmetrized tensorizable $f$-divergence, we derive a linear equation that any convex function $f$ that induces $D_f$ must satisfy. Intuitively, once this form is fixed, the resulting solution space has two degrees of freedom: one corresponds to the freedom to add an affine term to $f$ without affecting $D_f$, and the other corresponds to the fact that $f$ and $x \mapsto x f(1/x)$ induce the same symmetrized $f$-divergence. Once these two aspects are fixed, we can uniquely identify $f$. Formally, this argument reduces to identifying a two-element basis for the solution space of the associated homogeneous linear equation.
\end{enumerate}

\subsection{Related Work}
\paragraph{Tensorization of $f$-divergences} Our results complete \cite{CruzRodrigo2025Tof} efforts to characterize tensorizable $f$-divergences. Specifically, \cite{CruzRodrigo2025Tof} showed, under mild regularity assumptions on $f$, that if the tensorization rule for $D_f$ is a multivariate polynomial function of degree at most $n$, then $D_f$ must be a positive multiple of either KL, reverse KL or a Hellinger divergence of order $\alpha$. In comparison, we first identify every admissible tensorization rule, which allows us to characterize all tensorizable $f$-divergence without imposing any assumptions on the tensorization formulas. 

\paragraph{Axiomatic Characterizations of Information Measures}
The foundational work of Shannon \cite{shannon1948mathematical} spawned a prolific area of research focused on characterizing  information measures through axioms, especially compositional axioms that lead to functional equations. Hobson \cite{HobsonArthur1969Anto} provided an initial axiomatic characterization of the KL divergence by identifying all functionals on pairs of discrete probability measures that satisfy continuity, permutation invariance, recursivity, monotonicity, and that vanish at identical pairs. Related results were later obtained by Kannappan and Ng through recursivity \cite{KannappanPl1973Msof} and through additivity under product measures \cite{kannappan1974functional}. Csiszar provides a broad review of this line of work in \cite{csiszar2008axiomatic}. Furthermore, a comprehensive treatment of functional equations arising in information theory can be found in \cite{ebanks1998characterization}; as noted in \cite{csiszar2008axiomatic}, some of the results in \cite{ebanks1998characterization} imply characterizations of the KL and power divergences. Our work is closely related to these axiomatic characterizations, but differs in that we do not impose a specific compositional rule a priori. Instead, we first determine which compositional rules are compatible with the notion of tensorization and then characterize the $f$-divergences that admit those rules. 

\paragraph{Characterizations of Tsallis relative entropy} 
The Hellinger divergence of order $\alpha$ (defined in Table~\ref{tab:f-divergences}) naturally extends Tsallis $\alpha$-entropy \cite{TsallisConstantino2009ItNS}, which is why $H_\alpha$ is also commonly known as the Tsallis $\alpha$-relative entropy.\footnote{The standard notation for the Tsallis entropy uses the symbol $q$ instead of $\alpha$. However, this convention would inevitably clash with our repeated use of the symbol $Q$ to denote a probability measure.}  Notably, $H_\alpha$ decomposes as an $(\alpha-1)$-sum\footnote{We define the $\gamma$-sum of two real numbers as $a \oplus_\gamma b \coloneqq a + b + \gamma ab$, for real numbers $a,b, \gamma \in \mathbb{R}$} under product measures:
\begin{equation}
    \label{eq:relative-q-sum}
    H_\alpha(P_1 \otimes P_2 \Vert Q_1 \otimes Q_2) = H_\alpha(P_1 \Vert Q_1) + H_\alpha(P_2 \Vert Q_2) + (\alpha - 1)H_\alpha(P_1 \Vert Q_1)H_\alpha(P_2 \Vert Q_2).
\end{equation} 
This identity has motivated multiple axiomatic characterizations of the Tsallis relative entropy based on slight variations of the identity in \eqref{eq:relative-q-sum} (see, for example, \cite{suyari2004generalization, furuichi2005uniqueness, leinster2019short}). 
To the best of our knowledge, all such axiomatic characterizations of $H_\alpha$ use stronger versions of \eqref{eq:relative-q-sum} that explicitly introduce the powers $P^\alpha$ and $Q^{1-\alpha}$ into the picture. Such additional structure is warranted because, as \cite{jizba2017uniqueness} argue, the $(\alpha-1)$-sum property in \eqref{eq:relative-q-sum} is not unique to the functionals $\{H_\alpha\}_{\alpha \in \mathbb{R}_+\setminus\{1\}}$. Our work proves that, in this regime, tensorization is enough to uniquely identify Tsallis $\alpha$-relative entropy up to scaling for $\alpha \in \mathbb{R}_+\setminus\{1\}$.

\paragraph{Characterization of additive functionals} The
$f$-divergence belongs to a broader class of functionals called divergences, which are only required to be non-negative and vanish on pairs of identical measures. Within this broader class, \cite{MuXiaosheng2021FBDi} prove that any additive divergence satisfying the Data Processing Inequality is an integral of Rényi divergences. Similarly, \cite{mu2024monotone} characterize additive statistics as integrals of the normalized cumulant generating function. These works share with ours the same organizing principle: characterizing functionals of probability measures by how they behave under independence. Our setting differs in that we study a more general notion of composition --- namely, tensorization --- while we restrict our attention to the smaller class of $f$-divergences.

\section{Preliminaries}
For two probability measures $P$ and $Q$, we write $P \otimes Q$ to denote their product measure, which is uniquely defined on the product $\sigma$-algebra.
When $P$ and $Q$ are defined on the same measurable space, we say that $P$ is absolutely continuous with respect to $Q$ if $Q(A)=0$ implies $P(A)=0$ for every measurable set $A$; equivalently, $Q$ dominates $P$. Throughout this work $f$ exclusively denotes a real convex function with domain $(0, \infty)$ satisfying $f(1) = 0$. We denote the limits $\lim_{x \to 0}f(x)$ and $\lim_{x \to \infty} \frac{f(x)}{x}$ as $f(0)$ and $f'(\infty)$, respectively. Note that these limits always exist in $(- \infty, \infty]$ due to convexity.

\begin{definition}[$f$-divergence]
\label{def:f-divergence}
    Let $f:(0,\infty) \to \mathbb{R}$ be a convex function with $f(1) = 0$. For two probability measures $P$ and $Q$ on the same measurable space, let $p$ and $q$ be their corresponding Radon–Nikodym derivatives with respect to a common dominating $\sigma$-finite measure $\lambda$. 
    The \emph{$f$-divergence} is defined as
    \begin{equation}
    \label{eq:definition_Df}
    D_f(P \Vert Q) \coloneqq \int_{p>0,q>0} q f\left(\frac{p}{q}\right) \,d\lambda + f(0)Q(\{p = 0\}) + f^{\prime}(\infty)P(\{q = 0\}). 
    \end{equation}
    with the convention that if $P(\{q = 0\}) = 0$, then we take the product $f^{\prime}(\infty)\cdot P(\{q = 0\})$ to be zero.   
\end{definition}
\begin{remark}
    It is a well-known fact that $D_f(P \Vert Q)$ is invariant to the choice of dominating measure $\lambda$, as stated in \cite[Remark 7.2]{polyanskiy2025information}. 
\end{remark}

\begin{remark}
    It is crucial to note that different convex functions can give rise to the same $f$-divergence. For example, if we define $g(t) = f(t) + B(t-1)$, for some $B \in \mathbb{R}$, then $D_f(P \Vert Q) = D_g(P \Vert Q)$ for every pair of probability measures $P$ and $Q$. Additionally, the value $D_f(P\Vert Q)$ alone does not admit a universal interpretation of how close P and Q are. Rather, such values are most informative within a fixed divergence, for example when comparing pairs of measures or studying convergence. For this two reasons, the same $f$-divergence may appear in the literature under different normalizations, or even under different names. We refer the reader to Table~\ref{tab:f-divergences} for the exact definitions used herein.
\end{remark}

Given an $f$-divergence $D_f$, we denote its finite-value range by
\begin{equation}
    \label{eq:def_range}
    \mathcal{R}(D_f) 
    \coloneqq \{D_f(P \Vert Q) : \text{$P,Q$ are probability measures with } D_f(P \Vert Q) < \infty\}.
\end{equation}
In light of \cite[Theorem 2]{vajda1972f}, if $f(0)$ and $f'(\infty)$ are finite, then $\mathcal{R}(D_f) = [0, f(0) + f'(\infty)]$. Otherwise, we have that $\mathcal{R}(D_f) = [0, \infty)$.

\begin{table}[t]
    \centering
    \small
    \setlength{\tabcolsep}{6pt}
    \renewcommand{\arraystretch}{1.15}
    \begin{tabular}{@{} p{0.58\linewidth} >{\centering\arraybackslash}p{0.18\linewidth} c @{}}
        \toprule
        \textbf{Name} & \textbf{Notation} & \textbf{Representative $f$} \\
        \midrule
        Kullback-Leibler (KL) divergence
        & $D_\mathrm{KL}(P\Vert Q)$
        & $x\log x$ \\
        Reverse KL divergence
        & $D_\mathrm{KL}(Q\Vert P)$
        & $ -\log x$ \\
        Hellinger divergence of order $\alpha$ ($\alpha\in\mathbb{R}_+\setminus\{1\}$)
        & $H_\alpha(P\Vert Q)$
        & $\frac{x^\alpha-1}{\alpha-1}$ \\
        Power divergence ($\alpha \in\mathbb{R}\setminus\{0,1\}, c \in \mathbb{R}, c\alpha(\alpha-1)\geq 0$) & $D_{\alpha,c}(P\Vert Q)$ & $c(x^\alpha-1)$ \\
        Total variation distance
        & $\mathrm{TV}(P,Q)$
        & $\frac{1}{2}\lvert x-1\rvert$ \\
        \bottomrule
    \end{tabular}
    \caption{Prominent $f$-divergences, the notation we adopt for them and one representative of their associated convex functions.}
    \label{tab:f-divergences}
\end{table}

Next, we present a refinement of the definition of tensorization introduced in \cite{CruzRodrigo2025Tof}.

\begin{definition}[Tensorization]
\label{def:tensorization_base}
    Let $D_f$ be an $f$-divergence and let $(E, \mathcal{E})$ and  $(F, \mathcal{F})$ be arbitrary measurable spaces. We say that $D_f$ tensorizes if there exists a function $\tau_f: \mathcal{R}(D_f)^{2}\to\mathcal{R}(D_f)$ such that 
    \begin{equation}
        \label{eq:tensorization_base}
        D_f(P_1 \otimes P_2 \Vert Q_1 \otimes Q_2) = \tau_f\bigl(D_f(P_1 \Vert Q_1), D_f(P_2 \Vert Q_2)\bigr), 
    \end{equation}
    for any probability measures $P_1, Q_1$ on $(E, \mathcal{E})$ and $P_2, Q_2$ on $(F, \mathcal{F})$ with \[D_f(P_1 \Vert Q_1), D_f(P_2 \Vert Q_2) < \infty.\]
    Furthermore, we refer to $\tau_f$ as the tensorization base of $D_f$. 
\end{definition}

\begin{remark}      
   Our analysis mainly focuses on products of just two probability measures as we can use an inductive argument to extend any tensorization base to account for $n \geq 2$ factors. 
\end{remark}

\begin{remark}
We require tensorization to hold for probability measures on arbitrary measurable spaces, which is the most natural formalization. However, all our results extend to the setting where measures are confined to any subclass of measurable spaces closed under products; i.e., containing the product space of any two of its elements. See Appendix~\ref{appendix:domain-Df} for details.
\end{remark}

Our next lemma presents three key algebraic properties of tensorization bases. These properties are a reflection of how Radon-Nikodym derivatives combine under product measures. This result originally appeared in \cite[Lemma 2]{CruzRodrigo2025Tof}, but it still holds true under our refined definition of tensorization.
\begin{lemma}
\label{lem:tensorization-base-properties}
Let $\tau_f$ be a tensorization base. Then for any $d_1,d_2,d_3 \in \mathcal{R}(D_f)$, $\tau_f$ satisfies:
\begin{enumerate}
    \item \emph{Symmetry:} \[\tau_f(d_{1},d_{2}) = \tau_f(d_{2},d_{1}).
    \]
    \item \emph{Associativity:} \[\tau_f(\tau_f(d_1,d_2),d_3)=\tau_f(d_1,\tau_f(d_2,d_3)).
    \]
    \item \emph{Marginalization:} 
    \[\tau_f(0,d_1) = d_1 =\tau_f(d_1,0),\]
    where $0 \in \mathcal{R}(D_f)$ for every $D_f$.
\end{enumerate}
\end{lemma}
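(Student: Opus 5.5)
Each of the three identities follows the same pattern. Since every $d\in\mathcal{R}(D_f)$ is of the form $D_f(P\Vert Q)$ with $D_f(P\Vert Q)<\infty$, I pick realizing pairs $(P_i,Q_i)$ with $D_f(P_i\Vert Q_i)=d_i$. I then exhibit two product-type constructions that have the same $f$-divergence for measure-theoretic reasons, and evaluate each side through \eqref{eq:tensorization_base}. One subtlety is that Definition~\ref{def:tensorization_base} fixes arbitrary measurable spaces $(E,\mathcal{E})$ and $(F,\mathcal{F})$. The realizing pairs may live on different spaces, so I read the definition as requiring \eqref{eq:tensorization_base} for every choice of spaces with a single function $\tau_f$; that is the intended meaning.

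\emph{Symmetry.} The swap map $\sigma(x,y)=(y,x)$ is a measurable bijection from $E\times F$ to $F\times E$ with measurable inverse. It pushes $P_1\otimes P_2$ to $P_2\otimes P_1$ and $Q_1\otimes Q_2$ to $Q_2\otimes Q_1$. Densities with respect to $\lambda_1\otimes\lambda_2$ are carried to densities with respect to $\lambda_2\otimes\lambda_1$, and the integral and boundary terms in \eqref{eq:definition_Df} are preserved by change of variables. Hence $D_f(P_1\otimes P_2\Vert Q_1\otimes Q_2)=D_f(P_2\otimes P_1\Vert Q_2\otimes Q_1)$, and applying \eqref{eq:tensorization_base} to each side gives $\tau_f(d_1,d_2)=\tau_f(d_2,d_1)$.

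\emph{Marginalization.} Take a one-point space and $P_2=Q_2=\delta$, so $d_2=D_f(\delta\Vert\delta)=f(1)=0$. In particular $0\in\mathcal{R}(D_f)$. Now $P_1\otimes\delta$ and $Q_1\otimes\delta$ are the images of $P_1$ and $Q_1$ under the measurable isomorphism $E\to E\times\{\ast\}$. By the same invariance argument, $D_f(P_1\otimes\delta\Vert Q_1\otimes\delta)=d_1$, so $\tau_f(d_1,0)=d_1$. Symmetry then gives $\tau_f(0,d_1)=d_1$.

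\emph{Associativity.} Use the canonical identification of $(E\times F)\times G$ with $E\times(F\times G)$, under which $(P_1\otimes P_2)\otimes P_3=P_1\otimes(P_2\otimes P_3)$ and likewise for the $Q_i$. To apply the tensorization base twice, I must check that intermediate products have finite divergence. By \eqref{eq:tensorization_base}, $D_f(P_1\otimes P_2\Vert Q_1\otimes Q_2)=\tau_f(d_1,d_2)$, and $\tau_f$ takes values in $\mathcal{R}(D_f)$, which consists of finite values, so this quantity is finite. Therefore the left side equals $\tau_f(\tau_f(d_1,d_2),d_3)$ and, symmetrically, the right side equals $\tau_f(d_1,\tau_f(d_2,d_3))$. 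The two sides are divergences of the same pair of measures, so they coincide.

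The main obstacle is making the invariance of $D_f$ under measurable isomorphisms rigorous, including the boundary terms $f(0)Q(\{p=0\})$ and $f'(\infty)P(\{q=0\})$. I would handle it by transporting the dominating measure along the isomorphism, so that all densities and all sets $\{p=0\}$ and $\{q=0\}$ correspond exactly. Independence of the choice of dominating measure (the Remark after Definition~\ref{def:f-divergence}) then completes the argument.
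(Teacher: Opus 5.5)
Your proposal is correct and follows the argument the paper relies on: the paper gives no proof here, deferring to \cite[Lemma 2]{CruzRodrigo2025Tof} and attributing the three properties to how Radon--Nikodym derivatives combine under products, and your proof inherits each identity from the commutativity and associativity of the product construction and from a trivial factor, with finiteness of the intermediate divergence in the associativity step correctly secured by $\tau_f(d_1,d_2)\in\mathcal{R}(D_f)$. If you want to avoid the isomorphism-invariance bookkeeping you flag, the paper's LLR representation gives all three facts at once, since $\hat{\mu}(P_1\otimes P_2\Vert Q_1\otimes Q_2)=\hat{\mu}(P_1\Vert Q_1)*\hat{\mu}(P_2\Vert Q_2)$, convolution is commutative and associative with identity $\mu_0$, and $D_f(\mu_0)=f(1)=0$.
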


\section{Characterization of Tensorization Bases}
\label{sec:tensorization-bases-are-affine}
In this section, we identify every possible tensorization base function by interpreting $D_f(P \Vert Q)$ as a function of the log-likelihood ratio (LLR) distribution of $Q$ against $P$. Under this representation, $D_f$ behaves affinely with respect to mixtures of LLR distributions. Tensorization then propagates this affine structure to the divergence values, forcing every section of $\tau_f$ to be an affine map (Lemma~\ref{lem:base-is-affine-fn}). Finally, the convention that $f$-divergence preserves the indiscernibility of identicals and the marginalization property of tensorization bases determine the explicit formula of $\tau_f$ (Theorem~\ref{thm:functional_form_base}).

For completeness, we include a formal definition of the LLR distribution for arbitrary pairs of probability measures. Historically, the likelihood ratio is often defined using probability density or probability mass functions, but it is standard to extend this notion using the Radon-Nikodym derivative (see, for example, \cite{HalmosPaulR.1949AotR, LindleyD.V.1953SI,  yu2023ising, gonccalves2024likelihood}).

\begin{definition}[LLR Distribution]
\label{def:LLR-distribution-RN}
    Given a measurable function $\phi: (E, \mathcal{E}) \to (F, \mathcal{F})$ and a probability measure $P$ on $(E, \mathcal{E})$, we define the push-forward measure
    \begin{equation}
        \label{eq:def-push-forward}
        P \circ \phi^{-1}(A) \coloneqq P(\phi^{-1}(A)), \quad A \in \mathcal{F}.
    \end{equation}
    Let $Q$ be another probability measure on $(E, \mathcal{E})$. Take $p \coloneqq \frac{dP}{d\lambda}$ and $q \coloneqq \frac{dQ}{d\lambda}$ as the corresponding Radon–Nikodym derivatives with respect to a common dominating $\sigma$-finite measure $\lambda$. We call $\frac{p}{q}$ a likelihood ratio and define the LLR distribution of $Q$ against $P$ as  
    \begin{equation}
        \label{eq:def-LLR-pre-image}
        \hat{\mu}(P \Vert Q) \coloneqq Q \circ \left( \log \frac{q}{p}\right)^{-1},
        \end{equation}
    under the convention that $\log \frac{1}{0} \coloneqq +\infty$. 
    Note that $\hat{\mu}$ is invariant to the choice of $\lambda$ (see Lemma~\ref{lem:invariance_LLR_as}).
\end{definition}
\begin{remark}
    Equation \eqref{eq:def-LLR-pre-image} gives a probability measure on the extended real line such that
    \[
    \hat{\mu}(P\Vert Q)(\{-\infty\}) = 0.
    \] 
    However, to avoid technical issues when adding log-likelihood ratios, we will identify $\hat{\mu}(P \Vert Q)$ with its restriction to $\mathbb{R}_{+\infty} \coloneqq \mathbb{R}\cup\{+\infty\}$. 
\end{remark}

Naturally, the value $D_f(P \Vert Q)$ is determined by the LLR distribution of $Q$ against $P$. The proof of this statement is a straightforward application on an elementary fact from measure theory: $\int g \circ h \, d\mu = \int g \, d(\mu \circ h^{-1})$ \cite[Lemma 1.24]{kallenberg1997foundations}. 
\begin{lemma}
    Let $D_f$ be an $f$-divergence. Then, for probability measures $P$ and $Q$ on the same measurable space, 
    \begin{equation}
        \label{eq:divergence-LLR-relation}
        D_f(P \Vert Q) = \int (f \circ \exp^{-})\, d\hat{\mu}(P \Vert Q) + f^{\prime}(\infty)\left(1 - \int\exp^{-} \,d\hat{\mu}(P \Vert Q)\right),
    \end{equation}
    where $\exp^{-}: r \mapsto \exp(-r)$. 
\end{lemma}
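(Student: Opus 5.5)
We will prove the identity by splitting the domain of integration in Definition~\ref{def:f-divergence} into the three sets $\{p>0,q>0\}$, $\{p=0\}$ (where $q>0$ $\lambda$-a.e.\ relevant to $Q$) and $\{q=0\}$. Each piece is then matched with a piece of the right-hand side of \eqref{eq:divergence-LLR-relation}. The only tool is the change-of-variables formula $\int g\circ h\,dQ=\int g\,d(Q\circ h^{-1})$ from \cite[Lemma 1.24]{kallenberg1997foundations}, applied with $h=\log\frac{q}{p}$. To use it we must fix conventions for $f\circ\exp^{-}$ and $\exp^{-}$ at $+\infty$. Since $\exp^{-}(+\infty)=0$, we interpret $(f\circ\exp^{-})(+\infty)=f(0)$ as the limit. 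Recall that $\hat\mu$ puts no mass at $-\infty$ once we restrict to $\mathbb{R}_{+\infty}$, and the points with $q=0$ are $Q$-null.

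\textbf{First term.} Write $\int (f\circ \exp^{-})\,d\hat\mu(P\Vert Q)=\int_{\{q>0\}} f\bigl(\exp(-\log\tfrac{q}{p})\bigr)\,dQ$. On $\{p>0,q>0\}$ the integrand equals $f(p/q)$, and $dQ=q\,d\lambda$. This recovers $\int_{p>0,q>0} qf(p/q)\,d\lambda$. On $\{p=0,q>0\}$ the LLR equals $+\infty$ and the integrand is $f(0)$, which contributes $f(0)Q(\{p=0\})$.

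\textbf{Second term.} Similarly, $\int\exp^{-}\,d\hat\mu=\int_{\{q>0\}}\frac{p}{q}\,q\,d\lambda=P(\{q>0\})$, where the set $\{p=0\}$ contributes $0$. Hence $1-\int\exp^{-}d\hat\mu=P(\{q=0\})$, which produces the term $f'(\infty)P(\{q=0\})$. This uses the same convention as the definition when $P(\{q=0\})=0$.

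\textbf{Main obstacle.} The delicate step is justifying the change of variables when $f\circ\exp^{-}$ is not integrable or takes infinite values, namely when $f(0)=+\infty$ or the integral diverges. We handle this as follows. Convexity gives an affine minorant $f(x)\ge f'(1)(x-1)$. Splitting $f$ into this affine part plus a nonnegative remainder, the change-of-variables formula applies to the nonnegative part in $[0,\infty]$ without integrability assumptions, and it applies to the affine part, which is bounded by integrable functions, exactly as in the computation above. Both sides are then well defined in $(-\infty,\infty]$ and agree.

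Finally, the invariance of $\hat\mu$ with respect to $\lambda$ (Lemma~\ref{lem:invariance_LLR_as}) ensures the right-hand side does not depend on the dominating measure chosen.
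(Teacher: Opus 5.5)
Your proposal is correct and takes the same route as the paper, which proves the identity only by citing the change-of-variables formula $\int g\circ h\,d\mu=\int g\,d(\mu\circ h^{-1})$ with $h=\log\frac{q}{p}$; your region-by-region bookkeeping (the convention $(f\circ\exp^{-})(+\infty)=f(0)$, the computation $\int\exp^{-}\,d\hat{\mu}(P\Vert Q)=P(\{q>0\})$, and the affine-minorant argument for quasi-integrability) just makes explicit what the paper leaves implicit. One small fix: $f'(1)$ need not exist (e.g.\ for total variation), so take the minorant $c(x-1)$ with $c$ any subgradient of $f$ at $1$, such as the right derivative.
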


It is well-known that an arbitrary measure $\mu$ on $(-\infty, \infty]$ is an LLR distribution if and only if $\int \exp^-\,d\mu \leq 1$, as stated in \cite{yu2023ising} (see also Lemma~\ref{lem:LLR-equivalent-defs} in the Appendix). Thus, we abuse notation slightly and define
\begin{equation}
    \label{eq:def_f-divergence_LLR}
    D_f(\mu) \coloneqq \int (f \circ \exp^{-})\, d\mu + f^{\prime}(\infty)\left(1 - \int\exp^{-} \,d\mu\right),
\end{equation}
for any probability measure $\mu$ on $(-\infty, \infty]$ with $\int \exp^-\,d\mu \leq 1$. 

Crucially, under product measures, the LLR is given by the sum of marginal log-likelihood ratios. Thus, $\hat{\mu}(P_1 \otimes P_2 \Vert Q_1 \otimes Q_2)$ is the convolution of $\hat{\mu}(P_1 \Vert Q_1)$ and $\hat{\mu}(P_2 \Vert Q_2)$. More importantly, we have deduced that tensorization is directly connected to how $D_f$, understood as the map in \eqref{eq:def_f-divergence_LLR}, behaves with respect to the  convolution of measures. With this in mind, we establish some notation.

\begin{definition}
\label{def:convolution}
For two probability measures $\mu_1$ and $\mu_2$ on $(\mathbb{R}_{+\infty}, \mathcal{B}(\mathbb{R}_{+\infty}))$, we define the convolution $\mu_1*\mu_2$ as
\begin{equation}
    \mu_1*\mu_2(A) \coloneqq \mu_1 \otimes \mu_2\left(\left\{(x,y) \in \mathbb{R}_{+\infty}^2 : x + y \in A\right\}\right), \quad \forall A \in \mathcal{B}(\mathbb{R}_{+\infty}).
\end{equation}
\end{definition}

Our next lemma presents the correspondence between pairs of product measures and the convolution of the marginal LLR distributions in terms of $D_f$.
\begin{lemma}
\label{lemma:Df-df}
    Let $P_1, Q_1$ and $P_2, Q_2$ be pairs of probability measures on the same measurable space, respectively. Denote $\mu_1 \coloneqq \hat{\mu}(P_1 \Vert Q_1)$ and $\mu_2 \coloneqq \hat{\mu}(P_2 \Vert Q_2)$. The convolution $\mu_1*\mu_2$ is an LLR distribution, and
    \begin{equation}
        \label{eq:Df-df}
        D_f(\mu_1 * \mu_2) = D_f(P_1 \otimes P_2 \Vert Q_1 \otimes Q_2).
    \end{equation}
\end{lemma}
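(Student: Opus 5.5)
The plan is to show that the LLR distribution of the product pair coincides with $\mu_1 * \mu_2$, and then invoke the representation \eqref{eq:divergence-LLR-relation} from the preceding lemma, which expresses $D_f$ of a pair of measures through its LLR distribution. Concretely, I would prove $\hat{\mu}(P_1\otimes P_2 \Vert Q_1 \otimes Q_2) = \mu_1 * \mu_2$ as measures on $\mathbb{R}_{+\infty}$. Once that holds, $\mu_1*\mu_2$ is an LLR distribution by definition, and \eqref{eq:Df-df} follows by applying \eqref{eq:divergence-LLR-relation} to the product pair together with the definition \eqref{eq:def_f-divergence_LLR}.

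To identify the LLR distribution, I fix $\sigma$-finite dominating measures $\lambda_1$ on $(E,\mathcal{E})$ and $\lambda_2$ on $(F,\mathcal{F})$, with densities $p_i = dP_i/d\lambda_i$ and $q_i = dQ_i/d\lambda_i$. Then $\lambda_1\otimes\lambda_2$ is $\sigma$-finite and dominates both product measures. By Tonelli's theorem on rectangles, together with uniqueness of the product measure, the functions $p_1(x)p_2(y)$ and $q_1(x)q_2(y)$ are densities of $P_1\otimes P_2$ and $Q_1 \otimes Q_2$ with respect to $\lambda_1\otimes\lambda_2$. Because $\hat{\mu}$ does not depend on the choice of dominating measure (Lemma~\ref{lem:invariance_LLR_as}), I may use these particular densities. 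The log-likelihood ratio is then
\[
L(x,y) = \log\frac{q_1(x)q_2(y)}{p_1(x)p_2(y)} = L_1(x) + L_2(y), \qquad L_i \coloneqq \log\frac{q_i}{p_i}.
\]

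The main obstacle is justifying this additivity with the conventions at the boundary values. I handle it $Q_1\otimes Q_2$-almost surely, which is all that matters for the push-forward. The set where some $q_i = 0$ is $Q_1 \otimes Q_2$-null, so $L_i > -\infty$ almost surely. The remaining case is $p_i = 0$ with $q_i>0$. There the convention $\log\frac{1}{0} = +\infty$ gives $L = +\infty$, and also $L_1 + L_2 = +\infty$, since the other summand lies in $\mathbb{R}_{+\infty}$. Hence on a full-measure set the identity $L = L_1 + L_2$ holds in $\mathbb{R}_{+\infty}$, where addition is well defined.

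With additivity in hand, I compute the push-forward. Since $(L_1,L_2)$ pushes $Q_1\otimes Q_2$ forward to $\mu_1\otimes\mu_2$, for any $A \in \mathcal{B}(\mathbb{R}_{+\infty})$ we get
\[
(Q_1\otimes Q_2)\bigl(L^{-1}(A)\bigr) = (Q_1\otimes Q_2)\bigl(\{(x,y): L_1(x)+L_2(y)\in A\}\bigr) = \mu_1\otimes\mu_2\bigl(\{(s,t): s+t\in A\}\bigr).
\]
The last expression is exactly $\mu_1*\mu_2(A)$ by Definition~\ref{def:convolution}. Measurability of $(s,t)\mapsto s+t$ on $\mathbb{R}_{+\infty}^2$ is routine, since it is continuous when $+\infty$ is treated as absorbing. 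This establishes $\hat{\mu}(P_1\otimes P_2 \Vert Q_1 \otimes Q_2) = \mu_1 * \mu_2$, and the lemma follows as described in the first paragraph.
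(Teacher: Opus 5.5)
Your proposal is correct and takes essentially the same route as the paper. The paper derives the lemma directly from the identity $\hat{\mu}(P_1 \otimes P_2 \Vert Q_1 \otimes Q_2) = \hat{\mu}(P_1 \Vert Q_1) * \hat{\mu}(P_2 \Vert Q_2)$ in \eqref{eq:LLR-dist-product-convolution}, proved by factorizing the Radon--Nikodym derivatives with respect to $\lambda_1 \otimes \lambda_2$ and reading the convolution as the push-forward of $Q_1 \otimes Q_2$ under the summed log-likelihood ratio, while you additionally spell out the steps the paper leaves implicit (product densities via Tonelli and uniqueness of the product measure, the $+\infty$ convention on a full-measure set, and the push-forward of $Q_1 \otimes Q_2$ under $(L_1,L_2)$ being $\mu_1 \otimes \mu_2$).
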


We now present a key structural fact for our characterization: the map $\mu_1 \mapsto D_f(\mu_1 * \mu_2)$ is affine under mixtures of LLR distributions. 
\begin{lemma}
\label{lem:D_f-LLR-bi-affine}
    Let $\mu_1, \mu_2$, and $\mu_3$ be LLR distributions and let $p \in [0,1]$. Then
    \begin{equation}
    \label{eq:d_affinity}
        D_f\bigl((p\,\mu_1 + (1-p)\,\mu_2)*\mu_3\bigr) = p\, D_f(\mu_1 *\mu_3) + (1-p)\,D_f(\mu_2 * \mu_3).
    \end{equation}
    \begin{IEEEproof}
        From Definition \ref{def:convolution}, it is straightforward to check that
        \begin{equation}
            \label{eq:distributivity-mixture-convolution}
            (p\,\mu_1 + (1-p)\,\mu_2)*\mu_3 = p\,(\mu_1 *\mu_3) + (1-p)\, (\mu_2 * \mu_3).
        \end{equation}
    Moreover, the maps $\mu \mapsto \int \exp^- \,d\mu$ and $\mu \mapsto \int f \circ \exp^- \,d\mu$ are affine with respect to mixtures of LLR distributions. The former is a standard fact since $\exp^-$ is a non-negative function. Regarding the latter, because $D_f(\mu)$ is well-defined and non-negative for any LLR distribution $\mu$, then the integration of the negative part of $f \circ \exp^-$ is finite for any such $\mu$. Hence, $\mu \mapsto \int f \circ \exp^- \,d\mu$ is affine with respect of mixtures of LLR distributions as well. The proof concludes by invoking the preceding after substituting \eqref{eq:distributivity-mixture-convolution} into \eqref{eq:def_f-divergence_LLR}.
    \end{IEEEproof}
\end{lemma}

We are ready to show that every section of a tensorization base is an affine map. The key idea behind this proof is that tensorization upgrades the fact that $D_f(\mu_1 * \mu_2)$ is affine as a function of $\mu_1$ to being affine as a function of $D_f(\mu_1)$ because $D_f(\mu_1 * \mu_2)$ is completely determined by the two scalars $D_f(\mu_1)$ and $D_f(\mu_2)$. 
\begin{lemma}
\label{lem:base-is-affine-fn}
    Let $\tau_f$ be a tensorization base. For any $p \in [0,1]$ and $d_1, d_2, d_3 \in \mathcal{R}(D_f)$
    \begin{equation}
        \tau_f(p\, d_{1} + (1-p) d_{2}, d_3) = p\, \tau_f(d_{1}, d_3) + (1-p)\,\tau_f(d_{2}, d_3) = \tau_f(d_3, p\, d_{1} + (1-p) d_{2}).
    \end{equation}
    \begin{IEEEproof}
        Let $d_1, d_2, d_3 \in \mathcal{R}(D_f)$. By definition of $\mathcal{R}(D_f)$, there exist LLR distributions $\mu_1, \mu_2, \mu_3$ such that $d_i = D_f(\mu_i)$, for $i \in \{1,2,3\}$. Then, 
        \begin{align*}
            \tau_f(p\, d_{1} + (1-p) d_{2}, d_3) &= \tau_f(D_f(p\, \mu_1 + (1-p) \mu_2), D_f(\mu_3)) \\
            &= D_f((p\, \mu_1 + (1-p) \mu_2)* \mu_3) \quad \text{(by Lemma~\ref{lemma:Df-df})} \\
            &= p\, D_f(\mu_1 * \mu_3) + (1-p)D_f(\mu_2*\mu_3) \quad \text{(by Lemma~\ref{lem:D_f-LLR-bi-affine})}\\
            &= p\, \tau_f(d_1, d_3) + (1-p)\tau_f(d_2, d_3).
        \end{align*}
        Finally, the fact that $d_2 \mapsto \tau_f(d_1, d_2)$ is affine follows from symmetry (Lemma~\ref{lem:tensorization-base-properties}). 
        \end{IEEEproof}
\end{lemma}

We conclude this section with a classification of admissible tensorization bases. Once Lemma~\ref{lem:base-is-affine-fn} shows that $\tau_f$ is affine in each argument, the marginalization property from Lemma~\ref{lem:tensorization-base-properties} forces $\tau_f$ to take a ``multi-affine'' form.

\begin{theorem}
\label{thm:functional_form_base}
Let $D_f$ admit $\tau_f$ as its tensorization base. There exists $\gamma$ in $\mathbb{R}$ such that
\begin{equation} \label{eq:tensorization_base_functional_form}
    \tau_f(d_1, d_2) = d_1 + d_2 + \gamma\,d_1 d_2, \quad \forall d_1, d_2 \in \mathcal{R}(D_f).
\end{equation}
We call $\gamma$ the \emph{tensorization parameter} of $D_f$.
\begin{IEEEproof}
    First, we consider the degenerate case where $\mathcal{R}(D_f) = \{0\}$, which arises for any $f$ of the form $x\mapsto B(x-1)$ for some $B \in \mathbb{R}$. In this setting, \eqref{eq:tensorization_base_functional_form} trivially holds for any $\gamma \in \mathbb{R}$. Now, assume $\mathcal{R}(D_f) \neq \{0\}$. Invoking Lemma \ref{lem:base-is-affine-fn}, for a fixed $d_2 \in \mathcal{R}(D_f)$, $d_1\mapsto \tau_f(d_1, d_2)$ is an affine map. Thus, there exist $\alpha(d_2)$ and $\beta(d_2)$ independent of $d_1$ such that 
    \begin{equation}
    \label{eq:tau-affine-1}
        \tau_f(d_1, d_2) = \alpha(d_2)\, d_1 + \beta(d_2), \quad \forall d_1 \in \mathcal{R}(D_f).
    \end{equation}
    Marginalization (Lemma \ref{lem:tensorization-base-properties}) implies $\beta(d_2) = d_2$. Now, fix $d_1 \neq 0$, then, in general, for any $d_2 \in \mathcal{R}(D_f)$:
    \begin{equation*}
        \label{eq:affine-base-alpha}
        \alpha(d_2) = \frac{\tau_f(d_1, d_2) - d_2}{d_1}.
    \end{equation*}
    It is immediate to check that $d_2 \mapsto \frac{\tau_f(d_1, d_2)-d_2}{d_1}$ is an affine map and that $\alpha(0) = 1$ due to the marginalization property of $\tau_f$. Hence, there exists $\gamma \in \mathbb{R}$ for which 
    $\alpha(d_2) = \gamma d_2 + 1$. Substituting $\alpha(d_2) = \gamma d_2 + 1$ and $\beta(d_2) = d_2$ in \eqref{eq:tau-affine-1} concludes the proof.  
\end{IEEEproof}
\end{theorem}

\section{Symmetrized Tensorizable $f$-divergences}
\label{sec:local-characteristic}
In this section, we characterize tensorization on the subclass of symmetrized $f$-divergences. As we will show, the symmetrization $D_f^{\textnormal{sym}}$ of a tensorizable $f$-divergence is uniquely determined by two scalar parameters: the parameter $\gamma$ in its tensorization base and a quantity that we call its \emph{local characteristic}. Intuitively, the local characteristic captures the quadratic rate at which $D_f^{\textnormal{sym}}$ vanishes as two probability measures become indistinguishable.

Our proof technique involves the restriction of $D_f^{\textnormal{sym}}$ to the family of two-point symmetric LLR distributions. On this family the $f$-divergence simplifies to a real-valued function $g$, and tensorization enforces a ``doubling recursion'' for $g$, which makes the map $h:r \mapsto g(r)/r^2$ approximately invariant across a halving scale; specifically, it implies that $h(r) \approx h(r/2)$ for small enough $r$, and that the limit of $h$ as $r \to 0^+$ always exists. This limit is the local characteristic. 
We show that if two symmetrized $f$-divergences $D_{f_1}^{\textnormal{sym}}$ and $D_{f_2}^{\textnormal{sym}}$ share the same local characteristic and the same tensorization base, then their respective restrictions to two-point symmetric LLR distributions $g_1$ and $g_2$ must be identical on a neighborhood around zero; if $g_1$ and $g_2$ disagreed, the approximate scale-invariance mentioned above would propagate their discrepancy as $r \to 0$, implying convergence to different local characteristics, a contradiction. This local equivalence propagates to the full domain of $g_1$ and $g_2$ through the ``doubling recursion'' induced by tensorization. Given that the restriction $g$ uniquely identifies $D_f^{\textnormal{sym}}$, the previous argument yields a characterization of symmetrized tensorizable $f$-divergences in terms of their local characteristic and their tensorization parameter $\gamma$.

We begin our formal argument with the definition of a symmetric LLR distribution, which we adopt from \cite{yu2023ising}.
\begin{definition}[Symmetric LLR distribution]
We say that an LLR distribution $\mu$ is symmetric if, for every $A \in \mathcal{B}(\mathbb{R}_{+\infty})$, 
    \begin{equation}        \label{eq:def_symmetric_LLR}
        \int_A d\mu(r) = \int \mathbf{1}_A(-r) e^{-r} d\mu(r). 
    \end{equation}   
\end{definition}
As we will show, tensorization on symmetrized $f$-divergences can be characterized using two-point symmetric LLR distributions. We introduce these two notions next.

\begin{definition}
For a convex function $f: (0, \infty) \to \mathbb{R}$, let $\Tilde{f}: x \mapsto x f(1/x)$. We define the symmetrized $f$-divergence $D^{\textnormal{sym}}_f$ as 
\begin{equation}
    \label{eq:symmetrized-D_f}
    D^{\textnormal{sym}}_f(P, Q) = \frac{1}{2}D_f(P \Vert Q) + \frac{1}{2}D_f(Q \Vert P) = \frac{1}{2}D_{f+\tilde{f}}(P \Vert Q)
\end{equation}
Following our previous convention for $D_f$, we may write 
\[
D^{\textnormal{sym}}_f(\mu) = \frac{1}{2}D_{f+\tilde{f}}(\mu),
\]
where $\mu = \hat{\mu}(P \Vert Q)$ for some probability measures $P$ and $Q$.
\end{definition}

\begin{proposition}
    \label{ex:symmetric-2-LLR}
    Let $r >0$. The unique symmetric LLR distribution $\mu_r$ with support $\{-r, r\}$ is given by 
    \begin{equation}
        \mu_r(\{r\}) = \frac{1}{1+e^{-r}},\quad \mu_r(\{-r\}) = \frac{e^{-r}}{1 + e^{-r}}.
    \end{equation}
    Additionally, we define $\mu_0$ as the probability measure with support $\{0\}$.
\end{proposition}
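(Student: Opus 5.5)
We need to show two things: that $\mu_r$ as displayed is a symmetric LLR distribution, and that any symmetric LLR distribution supported on $\{-r,r\}$ must have these masses. Write such a distribution as $\mu = a\,\delta_r + b\,\delta_{-r}$ with $a,b\ge 0$ and $a+b=1$. The plan is to substitute $\mu$ into the defining identity \eqref{eq:def_symmetric_LLR}. That identity is an equality of two measures on $\mathcal{B}(\mathbb{R}_{+\infty})$: $\mu$ on the left, and $\nu(A) \coloneqq \int \mathbf{1}_A(-s)e^{-s}\,d\mu(s)$ on the right. So it is enough to compute $\nu$ explicitly for our two-atom $\mu$.

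\textbf{Computing $\nu$.} For our $\mu$,
\[
\nu(A) = a\,e^{-r}\,\mathbf{1}_A(-r) + b\,e^{r}\,\mathbf{1}_A(r),
\]
so $\nu = a e^{-r}\delta_{-r} + b e^{r}\delta_{r}$. Because $r>0$, the points $r$ and $-r$ are distinct. Comparing atoms, $\mu=\nu$ holds exactly when $b = a e^{-r}$; the second condition $a = b e^{r}$ is the same equation. Combining this with $a+b=1$ gives
\[
a = \frac{1}{1+e^{-r}}, \qquad b = \frac{e^{-r}}{1+e^{-r}}.
\]
Both values are strictly positive, so the support of $\mu$ is exactly $\{-r,r\}$. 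This gives uniqueness, and it also shows that these masses do satisfy \eqref{eq:def_symmetric_LLR}.

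\textbf{Checking it is an LLR distribution.} By the characterization cited before \eqref{eq:def_f-divergence_LLR} (Lemma~\ref{lem:LLR-equivalent-defs}), we need $\int \exp^-\,d\mu_r \le 1$. Direct computation gives
\[
\int \exp^-\,d\mu_r = \frac{e^{-r}}{1+e^{-r}} + \frac{e^{-r}e^{r}}{1+e^{-r}} = \frac{1+e^{-r}}{1+e^{-r}} = 1 .
\]
A more general route: applying \eqref{eq:def_symmetric_LLR} with $A=\mathbb{R}$ already forces $\int e^{-s}\,d\mu = \mu(\mathbb{R}) \le 1$ for any symmetric $\mu$.

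The statement about $\mu_0$ is a definition, so nothing needs proving there. One could note that $\delta_0$ satisfies \eqref{eq:def_symmetric_LLR} trivially, since $e^{0}=1$. No step here is a real obstacle. The only point needing care is that the identity must hold for all Borel sets $A$; this is handled by treating both sides as discrete measures and matching their atoms.
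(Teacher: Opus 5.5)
Your proof is correct: matching the atoms at $r$ and $-r$ in the defining identity forces $b = a e^{-r}$, which together with $a+b=1$ determines both masses (each strictly positive), and $\int \exp^{-}\,d\mu_r = 1$ shows $\mu_r$ is a genuine LLR distribution. The paper writes no separate proof of this proposition and treats it as exactly this direct check against the definition of a symmetric LLR distribution, so your approach is the intended one.
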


\begin{remark}
If $\mu$ is a symmetric LLR distribution with $\int \exp^{-} d\mu = 1$, then $D^{\textnormal{sym}}_f(\mu) = D_f(\mu).$ This property will later serve as a bridge to obtain a complete characterization of $D_f$ from this section's results.
\end{remark}

After restricting $D_f^\text{sym}$ to the family of two-point symmetric LLR distributions, $D_f^\text{sym}$ is identical to $D_f$ and tensorization induces a classical functional equation involving a real function. In particular, if $D_f$ tensorizes, there exists $\gamma \in \mathbb{R}$ such that
\begin{equation}
\label{eq:Df-r1-r2}
    D_f^\text{sym}(\mu_{r_1}*\mu_{r_2}) = D_f^\text{sym}(\mu_{r_1}) + D_f^\text{sym}(\mu_{r_2}) + \gamma D_f^\text{sym}(\mu_{r_1})D_f^\text{sym}(\mu_{r_2}).
\end{equation}
An explicit computation of this expression yields the following quadratic functional equation.
\begin{lemma}
    \label{lem:g-r1+r2}
     Let $D^{\textnormal{sym}}_f$ be a symmetrized tensorizable $f$-divergence, and let $g(r)$ denote its evaluation at the symmetric LLR distribution $\mu_r$. There exists $\gamma \in \mathbb{R}$ such that for any $0 \leq r_2 \leq r_1$
    \begin{equation}
    \label{eq:key-equation}
        \frac{1+e^{-r_1-r_2}}{(1+e^{-r_1})(1+e^{-r_2})}g(r_1+r_2) + \frac{1+e^{r_2-r_1}}{e^{r_2}(1+e^{-r_1})(1+e^{-r_2})}g(r_1-r_2) = g(r_1) + g(r_2) + \gamma g(r_1)g(r_2).  
    \end{equation}
\end{lemma}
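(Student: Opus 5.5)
The plan is to evaluate both sides of \eqref{eq:Df-r1-r2} explicitly for the two-point symmetric LLR distributions of Proposition~\ref{ex:symmetric-2-LLR}. The right-hand side is immediate: by definition $g(r)=D_f^{\textnormal{sym}}(\mu_r)$, so it equals $g(r_1)+g(r_2)+\gamma g(r_1)g(r_2)$.

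The first step is to justify \eqref{eq:Df-r1-r2} itself. Since $\mu_r$ is supported on $\mathbb{R}$ and is symmetric, we have $\int \exp^-\,d\mu_r=1$, and the same holds for $\mu_{r_1}*\mu_{r_2}$. Indeed, $\int \exp^- d(\mu_1*\mu_2)=\int\exp^-d\mu_1\int\exp^-d\mu_2$, and symmetry is preserved under convolution: taking $A=\{x+y\in B\}$ and using the product form of the defining identity \eqref{eq:def_symmetric_LLR}, the weight $e^{-(x+y)}$ factors. By the Remark following Proposition~\ref{ex:symmetric-2-LLR}, $D_f^{\textnormal{sym}}=D_f$ on all these measures. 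By Lemma~\ref{lemma:Df-df}, $D_f(\mu_{r_1}*\mu_{r_2})=\tau_f(D_f(\mu_{r_1}),D_f(\mu_{r_2}))$, and Theorem~\ref{thm:functional_form_base} then gives the $\gamma$-form. One caveat: the statement concerns a symmetrized tensorizable divergence. I would read this as saying that $D_f^{\textnormal{sym}}$ tensorizes with base $d_1+d_2+\gamma d_1d_2$, in which case the argument above applies directly with $D^{\textnormal{sym}}$ in place of $D_f$. The finiteness of $g(r)$ holds because the measures are finitely supported on $\mathbb{R}$.

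The second step is to decompose the left-hand side. The convolution $\mu_{r_1}*\mu_{r_2}$ is supported on $\{\pm(r_1+r_2),\pm(r_1-r_2)\}$ with product weights. Writing $a_i=1/(1+e^{-r_i})$, the point masses are
\begin{align*}
&r_1+r_2:\ a_1a_2, \qquad -(r_1+r_2):\ a_1a_2e^{-r_1-r_2},\\
&r_1-r_2:\ a_1a_2e^{-r_2}, \qquad r_2-r_1:\ a_1a_2e^{-r_1}.
\end{align*}
Group these into the mixture $w_+\mu_{r_1+r_2}+w_-\mu_{r_1-r_2}$ with
\[
w_+=a_1a_2\,(1+e^{-r_1-r_2}),\qquad w_-=a_1a_2\,e^{-r_2}(1+e^{r_2-r_1}).
\]
The masses within each pair are in ratio $e^{-s}$, so each pair is proportional to $\mu_s$. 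For $r_1=r_2$, the second pair collapses to the point $0$ with total mass $w_-$, which matches $\mu_0$. For $r_2\le r_1$, the quantity $r_1-r_2\ge0$ indexes the correct $\mu$.

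The final step applies the affinity of $\mu\mapsto D_f(\mu)$ under mixtures, as in the proof of Lemma~\ref{lem:D_f-LLR-bi-affine} (both integrals are affine), to obtain $w_+g(r_1+r_2)+w_-g(r_1-r_2)$. These coefficients match \eqref{eq:key-equation}. The only real obstacle is bookkeeping: verifying that $w_++w_-=1$ and handling the degenerate cases $r_2=0$ or $r_1=r_2$.
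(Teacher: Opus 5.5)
Your proposal is correct and follows the paper's route. The paper justifies \eqref{eq:Df-r1-r2} from the tensorization of $D_f$ together with $D_f^{\textnormal{sym}}=D_f$ on symmetric LLR distributions, and then explicitly computes $D_f^{\textnormal{sym}}(\mu_{r_1}*\mu_{r_2})$. Your grouping of the four atoms into $w_+\mu_{r_1+r_2}+w_-\mu_{r_1-r_2}$ is exactly that computation, and your weights check out, including $w_++w_-=1$ and the degenerate cases $r_2=0$ and $r_1=r_2$.
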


We characterize $D_f^{\textnormal{sym}}$ by solving Equation~\eqref{eq:key-equation}. This equation is sufficiently constraining that a single local parameter fully describes its solution space: the quadratic rate at which $g(r)$ vanishes as $r \to 0$. We first ensure that this rate is always well-defined and finite for any $D_f^{\textnormal{sym}}$. 
\begin{theorem}
\label{thm:sym_c}
Let $D^{\textnormal{sym}}_f$ be any symmetrized tensorizable $f$-divergence, and let $g(r)$ denote its evaluation at the symmetric LLR distribution $\mu_r$. The limit 
\begin{equation}
    \label{eq:local-characteristic}
    \lim_{r\to 0^+} \frac{g(r)}{r^2}
\end{equation}
always exists, is finite and nonnegative. We refer to this value as the \emph{local characteristic} of $D^{\textnormal{sym}}_f$ and of $D_f$. More generally, for any function $g : [0,\infty) \to [0,\infty)$, whenever the limit in \eqref{eq:local-characteristic} exists and is finite, we call it the local characteristic of $g$.
\end{theorem}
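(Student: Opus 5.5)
The plan is to extract from Lemma~\ref{lem:g-r1+r2} a ``doubling recursion'' for $g$, use it to control $h(r)\coloneqq g(r)/r^2$ along dyadic sequences $r2^{-k}$, and then use the full two-variable equation \eqref{eq:key-equation} at small scales to show that the dyadic limits do not depend on the starting point $r$.

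\textbf{Preliminary facts.} I will record three facts first.
\begin{itemize}
\item $g\ge 0$ and $g(0)=D_f^{\textnormal{sym}}(\mu_0)=0$.
\item $g$ is continuous on $[0,\infty)$ with $g(r)\to 0$ as $r\to0^+$. This holds because $\mu_r$ has two atoms whose locations and weights vary continuously, and $f$ is continuous on $(0,\infty)$.
\item $g$ is nondecreasing. For $r<r'$ the pair inducing $\mu_r$ is obtained from the pair inducing $\mu_{r'}$ through a binary symmetric channel, so the data processing inequality for $D_f$ applies.
\end{itemize}

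\textbf{Doubling recursion.} Setting $r_1=r_2=r$ in \eqref{eq:key-equation} and using $g(0)=0$ gives
\[
g(2r)=\frac{(1+e^{-r})^2}{1+e^{-2r}}\bigl(2g(r)+\gamma g(r)^2\bigr).
\]
The prefactor simplifies to $2\cosh^2(r/2)/\cosh r$. Dividing by $(2r)^2$ yields
\[
h(2r)=c(r)\,h(r)\bigl(1+\tfrac{\gamma}{2}g(r)\bigr),\qquad c(r)\coloneqq\frac{1+\cosh r}{2\cosh r}\in(0,1],
\]
where $1-c(r)=O(r^2)$.

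\textbf{Boundedness of $h$.} Since $g(r)\to0$, for $r\le r_0$ the factor $1+\tfrac{\gamma}{2}g(r)$ lies within $\varepsilon$ of $1$. Hence $g(2r)\ge(2-\varepsilon')g(r)$, so $g(r_02^{-k})\le g(r_0)(2-\varepsilon')^{-k}$ decays geometrically. Consequently the two series
\[
\sum_k\bigl|\log c(r_02^{-k})\bigr|=\sum_k O(4^{-k}),\qquad \sum_k\bigl|\log\bigl(1+\tfrac{\gamma}{2}g(r_02^{-k})\bigr)\bigr|
\]
both converge, and the convergence is uniform for starting points $s\in[r_0/2,r_0]$. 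Telescoping $\log h$ along $s2^{-k}$ then shows two things. First, $h$ is bounded on $(0,r_0]$. Second, the dyadic limit $L(s)\coloneqq\lim_k h(s2^{-k})$ exists, is finite and nonnegative, and depends continuously on $s$. If $h(s)=0$ for some $s$, then $g$ vanishes on $[0,s]$ by monotonicity; the recursion then spreads this zero, so $h\equiv0$ and the limit is $0$. This case is therefore harmless.

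\textbf{Main obstacle: $L$ is constant.} The remaining step, which I expect to be the main obstacle, is showing that $L(s)$ does not depend on $s$; otherwise the limit along $r\to0^+$ could oscillate log-periodically. For this I will use the general equation at small scale. Set $G_k(x)\coloneqq g(x2^{-k}r_0)/(2^{-k}r_0)^2$.
\begin{itemize}
\item By the preceding step, $G_k(x)\to x^2L(xr_0)$ pointwise for $x>0$. Here $L$ is extended to be $2$-scale invariant, so the limit function is $G(x)\coloneqq x^2L(xr_0)$, and it is continuous.
\item In \eqref{eq:key-equation} with $r_i=x_i2^{-k}r_0$, both coefficients on the left tend to $\tfrac12$, with errors $O(2^{-k})$. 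The term $\gamma g(r_1)g(r_2)$ is $O(4^{-k})$ relative to the scale $(2^{-k}r_0)^2$.
\item Passing to the limit therefore gives the parallelogram equation
\[
G(x+y)+G(x-y)=2G(x)+2G(y)\qquad(0\le y\le x).
\]
\end{itemize}
A continuous nonnegative solution of this equation, extended evenly, must be $G(x)=Kx^2$. This forces $L\equiv K$.

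\textbf{Conclusion.} Finally I combine the uniform dyadic convergence on $[r_0/2,r_0]$ with $L\equiv K$. Every small $r$ has the form $s2^{-k}$ with $s$ in that window, so $\lim_{r\to0^+}h(r)=K\in[0,\infty)$.
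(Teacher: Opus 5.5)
Your proposal is correct and follows essentially the same route as the paper: the doubling recursion gives dyadic limits of $h$ through a telescoping product, and rescaling \eqref{eq:key-equation} at scale $2^{-k}$ gives the parallelogram law for $x^2\ell_x$, which forces these limits to be independent of the starting point. The differences are only technical---you get summability of $g(r2^{-k})$ from geometric decay under the recursion rather than from the Lipschitz bound $g(r)\le 4Lr$, and you finish with continuity of $L$ and the continuous solution $G(x)=Kx^2$ of the quadratic equation, whereas the paper inducts over $\mathcal{D}(x)$ and then uses density, continuity of $h$, and the uniform bound $\lvert\log h(x)-\log\ell_x\rvert\le Cx$.
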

We provide the proof of Theorem~\ref{thm:sym_c} in Appendix~\ref{appendix:proof-local-characteristic}. 

    Theorem \ref{thm:sym_c} immediately rules out the Total Variation (TV) distance as a tensorizable divergence. Recall that TV is the $f$-divergence defined by $f: t \mapsto \frac{1}{2}\lvert t - 1 \rvert$. In this case, $g(r) = \frac{1-e^{-r}}{1+e^{-r}} = \tanh(r/2)$. Invoking L'Hôpital's rule, we can see its local characteristic diverges: 
    \[
        \lim_{r \to 0^+} \frac{1-e^{-r}}{(1+e^{-r})r^2} = \lim_{r \to 0^+} \frac{e^{-r}}{2r(1+e^{-r}) - r^2e^{-r}} = +\infty. 
    \]

The next result shows that the local characteristic is not merely a descriptive quality of $g$. Together with the tensorization parameter $\gamma$, it determines all solutions to the functional equation in \eqref{eq:key-equation}. A proof of this result can be found in Appendix~\ref{appendix:proof-of-unique-g}.

\begin{theorem}
\label{thm:unique-g}
    Fix a tensorization parameter $\gamma \in \mathbb{R}$ (as defined in Theorem~\ref{thm:functional_form_base}) and a local characteristic $\ell \in [0,\infty)$. There is at most one continuous function $g: [0,\infty) \to [0,\infty)$ with this local characteristic that also solves \eqref{eq:key-equation} for this particular $\gamma$.
\end{theorem}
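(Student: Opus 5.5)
The plan follows the outline in the section introduction: first show that the two solutions agree near zero, using approximate halving invariance, then extend the agreement to all of $[0,\infty)$ using the doubling recursion. Let $g_1,g_2$ be two continuous solutions of \eqref{eq:key-equation} with the same $\gamma$ and local characteristic $\ell$, and set $\Delta = g_1 - g_2$.

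\textbf{Doubling recursion.} Put $r_1=r_2=r$ in \eqref{eq:key-equation}. Since $g(0)=0$ (the local characteristic is finite, so $g(r)\to 0$), this gives
\[
\frac{1+e^{-2r}}{(1+e^{-r})^2}\,g(2r) + \frac{2}{e^{r}(1+e^{-r})^2}\,g(0) = 2g(r) + \gamma g(r)^2 .
\]
Write $a(r) \coloneqq (1+e^{-r})^2/(1+e^{-2r})$, so that $a(r) = 2 - r + O(r^2)$ and in particular $a(r)\in(1,2]$ for $r>0$. The recursion becomes $g(2r) = a(r)\bigl(2g(r)+\gamma g(r)^2\bigr)$. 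Subtracting the recursions for $g_1$ and $g_2$ yields
\[
\Delta(2r) = a(r)\,\Delta(r)\bigl(2+\gamma(g_1(r)+g_2(r))\bigr).
\]

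\textbf{Local agreement.} Since $g_i(r)=O(r^2)$, the factor $2+\gamma(g_1+g_2)$ is $2+O(r^2)$. Hence $|\Delta(2r)| = 4|\Delta(r)|\,(1-\tfrac r2+O(r^2))$. Define $k(r) = |\Delta(r)|/r^2$; then
\[
k(2r) = k(r)\bigl(1-\tfrac r2 + O(r^2)\bigr).
\]
Fix $r_0$ small and iterate downward along $r_0 2^{-n}$:
\[
k(r_0) = k(r_02^{-n})\prod_{j=1}^{n}\bigl(1-\tfrac{r_02^{-j}}{2}+O(4^{-j}r_0^2)\bigr).
\]
The product converges to a positive limit $c(r_0)\in(0,1]$, since $\sum 2^{-j}$ converges. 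Because both $g_i$ have local characteristic $\ell$, we have $k(r_02^{-n}) = |g_1(r_02^{-n})/(r_02^{-n})^2 - g_2(r_02^{-n})/(r_02^{-n})^2| \to |\ell-\ell| = 0$. Therefore $k(r_0)=0$, and $\Delta \equiv 0$ on some interval $(0,\delta]$; it also vanishes at $0$.

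\textbf{Propagation.} If $\Delta=0$ on $[0,R]$, the recursion gives $\Delta(2r)=0$ for every $r\le R$, so $\Delta=0$ on $[0,2R]$. Induction then yields $\Delta\equiv0$ on $[0,\infty)$. Continuity is needed only to guarantee $g(0)=0$ and to pass between open and closed intervals. If one prefers, the general equation with $r_2<r_1$ can also be used to fill in points, but doubling alone suffices.

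\textbf{Main obstacle.} The delicate step is the local one. It requires the expansion $a(r)=2-r+O(r^2)$ together with the bound $g_i=O(r^2)$, so that the product of correction factors converges to a nonzero limit. That is what lets vanishing of $k$ at small scales force vanishing at $r_0$. One must also handle uniformity of the $O(\cdot)$ terms on $(0,r_0]$, which follows because $g_i(r)/r^2$ is bounded near $0$ (Theorem~\ref{thm:sym_c}). If $\gamma(g_1+g_2)$ were negative enough to make the factor vanish, the argument would break, but for small $r$ this cannot happen.
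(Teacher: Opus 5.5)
Your proposal is correct and is essentially the paper's argument (Lemma~\ref{lem:neighborhood} followed by doubling). You write $|g_1-g_2|(r)/r^2$ as a product of the factors $\tfrac{a(r_k)}{2}\bigl|1+\tfrac{\gamma}{2}(g_1+g_2)(r_k)\bigr|$ times $|g_1-g_2|(r_n)/r_n^2\to 0$, bound the product, and then propagate agreement outward via $g(2r)=a(r)\bigl(2g(r)+\gamma g(r)^2\bigr)$, and you also take the extra care of justifying $g(0)=0$ from continuity.

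One slip does not affect validity: $a(r)=(1+e^{-r})^2/(1+e^{-2r})=1+1/\cosh r=2-r^2/2+O(r^4)$, not $2-r+O(r^2)$. Hence the product need not lie in $(0,1]$ when $\gamma>0$. All you actually use is that the product stays bounded, which follows from $a(r)\le 2$ and $g_i(r)\le Mr^2$ near $0$. That second bound comes directly from the finite local characteristic in the hypothesis, exactly as in the paper, not from Theorem~\ref{thm:sym_c}.
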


We now translate uniqueness of the restriction $g$ into a uniqueness of the symmetrized $f$-divergence. Specifically, a change of variables reveals that $g$ characterizes the map $x \mapsto f(x) + x f(1/x)$, which uniquely identifies $D_f^{\textnormal{sym}}$.
\begin{theorem}[Uniqueness of Symmetrized $f$-divergence]
\label{def:unique_sym}
For any tensorization base $\tau$ and any local characteristic $\ell$, there exists at most one symmetrized $f$-divergence $D^{\textnormal{sym}}_f$ that has local characteristic $\ell$, whose associated $f$-divergences $D_f$ and $D_{\tilde{f}}$ admit $\tau$ as their tensorization bases.
\begin{IEEEproof}
    Let $\tau$ be a tensorization base and $\ell \in [0, \infty)$. The statement trivially holds if there exists no tensorizable $D^{\textnormal{sym}}_f$ that has local characteristic $\ell$ and tensorization base $\tau$ associated with $D_f$ and $D_{\tilde{f}}$. Now, assume there exists at least one $D^{\textnormal{sym}}_f$ with these two features. Moreover, suppose $D_{f_1}$ and $D_{f_2}$ have tensorization base $\tau$ and local characteristic $\ell$. Let $g_1(r)$ and $g_2(r)$ be the evaluations of $D^{\text{sym}}_{f_1}$ and $D^{\text{sym}}_{f_2}$ at the symmetric LLR distribution on $\{-r,r\}$, with $r \geq 0$. A simple substitution yields 
    \begin{equation}
        g_i(r) = f_i(e^{-r}) \cdot \frac{1}{1 + e^{-r}} + f_i(e^{r}) \cdot \frac{e^{-r}}{1 + e^{-r}}, \quad i \in \{1,2\}. 
    \end{equation}
    By Theorem~\ref{thm:unique-g}, $g_1(r) = g_2(r)$ for all $r \in [0,\infty)$. Thus, by considering the substitutions $x = e^{-r}$ and $x = e^r$, we get
    \begin{equation}
    \label{eq:symmetrized-f-equality}
        (f_1+\tilde{f}_1)(x) = (f_2+\tilde{f}_2)(x), \quad \forall x \in (0,\infty).
    \end{equation}
    Invoking \eqref{eq:symmetrized-D_f}, this implies $D_{f_1}^{\textnormal{sym}}=D_{f_2}^{\textnormal{sym}}$.
\end{IEEEproof}
\end{theorem}

By leveraging the uniqueness ensured by the previous theorem, we next list all solutions to \eqref{eq:key-equation}.   
\begin{corollary}
\label{cor:explicit-form-g}
    Suppose $g: [0, \infty) \to [0, \infty)$ solves \eqref{eq:key-equation} for some $\gamma \in \mathbb{R}$ and $g$ has a local characteristic $\ell \in [0, \infty)$. Then, for all $r \in [0,\infty)$, 
    \begin{equation}
        \label{eq:explicit-form-g}
        g(r) = \begin{cases}
            \displaystyle \frac{1}{\gamma}\left(\frac{e^{-\alpha r} + e^{-(1-\alpha)r}}{1+e^{-r}}-1\right), & \gamma \neq 0,\, \gamma\ell \geq -\frac{1}{8}, \\
            \displaystyle \frac{1}{\gamma}\left(\frac{\cos(\beta r)}{\cosh(r/2)}-1\right), &  \gamma\ell < -\frac{1}{8},\\
            \displaystyle \frac{2\ell r(1-e^{-r})}{1+e^{-r}}, & \gamma = 0,
        \end{cases}
    \end{equation}
    where $\alpha$ is any root of the polynomial $x(x - 1) - 2\gamma \ell$ and $\beta = \frac{1}{2}\sqrt{-1-8\gamma \ell}$.
    \begin{IEEEproof}
        Let $\gamma \in \mathbb{R}$ and $\ell \in [0, \infty)$ and take $g$ according to \eqref{eq:explicit-form-g}. It is straightforward to check that $g$ fulfills equation \eqref{eq:key-equation} for this $\gamma$ and that \[\lim_{r \to 0^+}\frac{g(r)}{r^2} = \ell.\]
        By Theorem~\ref{thm:unique-g}, any other function from $[0, \infty)$ to $[0,\infty)$ that also solves \eqref{eq:key-equation} with this parameter $\gamma$ and that has local characteristic $\ell$ needs to be identical to $g$.
    \end{IEEEproof}
\end{corollary} 

Theorem~\ref{thm:unique-g} and Corollary~\ref{cor:explicit-form-g} apply to any solution of Equation~\eqref{eq:key-equation} with a local characteristic. However, our next lemma states that only two of the three regimes in Corollary~\ref{cor:explicit-form-g} are admissible when $g$ comes from a symmetrized tensorizable $f$-divergence. In this case, because $g$ is induced by a convex function, it is straightforward to prove that it is a nondecreasing function. However, when $\gamma \ell < -\frac{1}{8}$, the corresponding expression in \eqref{eq:explicit-form-g} is oscillatory, so it cannot be nondecreasing. 

\begin{theorem}
\label{thm:relevant-ell-gamma-case}
    Let $D_f^{\text{sym}}$ be a symmetrized tensorizable $f$-divergence, $\gamma$ its tensorization parameter and $\ell$ its local characteristic. We have $\gamma \ell \geq -\frac{1}{8}$.
\end{theorem}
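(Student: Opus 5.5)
The plan is to argue by contradiction, using monotonicity of $g$ together with Corollary~\ref{cor:explicit-form-g}. Let $g(r)=D_f^{\textnormal{sym}}(\mu_r)$. Suppose $\gamma\ell<-\frac18$. Then $\gamma\neq 0$ and $\ell>0$, so $\gamma<0$. Corollary~\ref{cor:explicit-form-g} then forces
\[
g(r)=\frac{1}{\gamma}\left(\frac{\cos(\beta r)}{\cosh(r/2)}-1\right), \qquad \beta=\tfrac12\sqrt{-1-8\gamma\ell}>0,
\]
on all of $[0,\infty)$. To apply the corollary we need $g\geq 0$ and the existence of the local characteristic. The first holds because $f$-divergences are nonnegative; the second is Theorem~\ref{thm:sym_c}. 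We also need that $g$ solves \eqref{eq:key-equation}, which is Lemma~\ref{lem:g-r1+r2}.

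The first key step is to show that $g$ is nondecreasing on $[0,\infty)$ whenever it comes from a convex $f$. Write $\phi=\frac12(f+\tilde f)$, which is convex with $\phi(1)=0$ and satisfies $\tilde\phi=\phi$. By the formula in the proof of Theorem~\ref{def:unique_sym},
\[
g(r)=\frac{\phi(e^{-r})+e^{-r}\phi(e^{r})}{1+e^{-r}}=\frac{2\phi(e^{-r})}{1+e^{-r}},
\]
where the second equality uses $e^{-r}\phi(e^{r})=\tilde\phi(e^{-r})=\phi(e^{-r})$.

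I would avoid differentiating this and instead use a data-processing argument. For $0\le r'<r$, the pair $(P,Q)$ realizing $\mu_{r'}$ is a garbling of the pair realizing $\mu_r$. Concretely, take binary $P=(\frac{1}{1+e^{-r}},\frac{e^{-r}}{1+e^{-r}})$ and $Q$ equal to $P$ reversed. Passing both through a binary symmetric channel with a suitable crossover probability yields the same structure with parameter $r'$, since the outputs are again swapped versions of each other. The data-processing inequality for $D_\phi$ then gives $g(r')\le g(r)$.

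An alternative is direct: $\phi$ is convex and $\phi(1)=0$ is its minimum, because $\phi\ge0$ after adding a suitable affine term, so $\phi$ is nonincreasing on $(0,1]$. However, the factor $1/(1+e^{-r})$ complicates this route, so I prefer the data-processing argument.

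The second step is to show that the explicit oscillatory form is not monotone. Since $\gamma<0$,
\[
g(r)=\frac{1}{|\gamma|}\left(1-\frac{\cos(\beta r)}{\cosh(r/2)}\right).
\]
At $r_k=2\pi k/\beta$ we get $g(r_k)=\frac{1}{|\gamma|}\bigl(1-\operatorname{sech}(r_k/2)\bigr)$. At $s_k=(2k+1)\pi/\beta$, with $s_k>r_k$, we get $g(s_k)=\frac{1}{|\gamma|}\bigl(1+\operatorname{sech}(s_k/2)\bigr)>g(r_{k+1})$, and $r_{k+1}>s_k$. This contradicts monotonicity, so $\gamma\ell\ge-\frac18$.

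I expect the main obstacle to be the monotonicity step, specifically verifying the binary symmetric channel garbling cleanly. The output of the channel applied to the $r$-pair is a binary pair with $P'=(a,1-a)$ and $Q'=(1-a,a)$. As the crossover probability ranges over $[0,\frac12]$, the value $a$ ranges continuously between $\frac12$ and $\frac{1}{1+e^{-r}}$. Since the pair with parameter $r'$ has exactly this swap form with $a=\frac{1}{1+e^{-r'}}$, it is attained, and data processing applies.
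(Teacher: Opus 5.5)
Your proposal is correct and follows the paper's own argument: $g$ is nondecreasing because it comes from a convex generator, and the oscillatory form that Corollary~\ref{cor:explicit-form-g} forces when $\gamma\ell<-\frac{1}{8}$ is not nondecreasing. The paper only says the monotonicity is straightforward; your binary-symmetric-channel garbling combined with data processing is a valid way to supply it. Your ``alternative'' route is in fact no harder, since $g(r)=\phi(e^{-r})\cdot\frac{2}{1+e^{-r}}$ is a product of two nonnegative nondecreasing factors.
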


\section{Characterization of Tensorizable $f$-divergences}
\label{sec:main-characterization}
In this section we present our main characterization of tensorizable $f$-divergences. Our method consists of evaluating a tensorizable $f$-divergence on the convolution of an arbitrary two-point LLR distribution and a symmetric one to obtain a nonhomogeneous linear functional equation satisfied by any function that induces the $f$-divergence. Our main characterization then follows by identifying all convex solutions of this equation. The next lemma provides the underlying identity in this analysis. The result follows directly from Theorem~\ref{thm:functional_form_base} and the fact that $D_f(\mu) = D_f^{\textnormal{sym}}(\mu)$ for any symmetric LLR distribution $\mu$. 
\begin{lemma}
Let $D_f$ be an $f$-divergence with tensorization parameter $\gamma$. If $\nu$ and $\mu$ are, respectively, an arbitrary and a symmetric LLR distribution, then
 \begin{equation}
        \label{eq:semi-symmetrized}
        D_f(\nu * \mu) = D_f(\nu) + D^{\textnormal{sym}}_f(\mu) +\gamma D_f(\nu) D^{\textnormal{sym}}_f(\mu). 
\end{equation}
\end{lemma}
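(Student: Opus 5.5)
The plan is to realize both LLR distributions by concrete pairs of probability measures, apply the tensorization base from Theorem~\ref{thm:functional_form_base}, and then replace $D_f(\mu)$ by $D_f^{\textnormal{sym}}(\mu)$ using the symmetry of $\mu$.

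\emph{Realization and tensorization.} By Lemma~\ref{lem:LLR-equivalent-defs}, every LLR distribution is $\hat{\mu}(P\Vert Q)$ for some pair of probability measures. So choose $P_1,Q_1$ with $\hat{\mu}(P_1\Vert Q_1)=\nu$ and $P_2,Q_2$ with $\hat{\mu}(P_2\Vert Q_2)=\mu$. Lemma~\ref{lemma:Df-df} then gives
\[
D_f(\nu*\mu)=D_f(P_1\otimes P_2\Vert Q_1\otimes Q_2).
\]
When $D_f(\nu)$ and $D_f(\mu)$ are both finite, they lie in $\mathcal{R}(D_f)$. Definition~\ref{def:tensorization_base} and Theorem~\ref{thm:functional_form_base} then give
\[
D_f(\nu*\mu)=D_f(\nu)+D_f(\mu)+\gamma D_f(\nu)D_f(\mu).
\]

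\emph{Replacing $D_f(\mu)$ by $D^{\textnormal{sym}}_f(\mu)$.} This is the only step that needs actual work, since the remark preceding this lemma assumes $\int\exp^-\,d\mu=1$. I would show that symmetry forces this. Apply \eqref{eq:def_symmetric_LLR} with $A=\mathbb{R}_{+\infty}$. Since $\mu$ puts no mass at $-\infty$, the indicator $\mathbf{1}_A(-r)$ vanishes only at $r=+\infty$, where $e^{-r}=0$ anyway. Hence $1=\int e^{-r}\,d\mu$. This in turn gives $\mu(\{+\infty\})=0$: otherwise $\int e^{-r}\,d\mu\le 1-\mu(\{+\infty\})<1$.

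Next, \eqref{eq:def_symmetric_LLR} says $\mu$ is invariant under the change of measure $d\mu\mapsto e^{-r}d\mu$ composed with the reflection $r\mapsto -r$. Write $\tilde f(x)=xf(1/x)$. Then $\tilde f(e^{-r})=e^{-r}f(e^{r})$, so
\[
\int \tilde f\circ\exp^-\,d\mu=\int e^{-r}f(e^{r})\,d\mu(r)=\int f(e^{-r})\,d\mu(r).
\]
The last equality uses the symmetry identity extended from indicators to measurable functions by the usual monotone-class argument, applied separately to the positive and negative parts. Because $\int\exp^-\,d\mu=1$, the $f'(\infty)$ and $\tilde f'(\infty)$ terms in \eqref{eq:def_f-divergence_LLR} vanish. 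Therefore $D_{\tilde f}(\mu)=D_f(\mu)$, and so $D^{\textnormal{sym}}_f(\mu)=\tfrac12 D_{f+\tilde f}(\mu)=D_f(\mu)$. Substituting this into the tensorization identity yields \eqref{eq:semi-symmetrized}.

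\emph{Expected obstacle.} Beyond the integrability bookkeeping in the previous step, the main gap is the infinite-value cases, which the tensorization definition does not cover. I would handle them by monotonicity. The data processing inequality applied to the marginal projection gives $D_f(\nu*\mu)\ge\max\{D_f(\nu),D_f(\mu)\}$, so if either input is $+\infty$ the left side is $+\infty$. The right side is then also $+\infty$ under the natural extended-real convention, using $\gamma\ge0$ whenever the range is unbounded, since $\tau_f$ must map into $\mathcal{R}(D_f)$. Alternatively, one can restrict the statement to finite values, which is all the later argument uses.
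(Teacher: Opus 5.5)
Your proposal is correct and follows the paper's route: the paper's proof is Theorem~\ref{thm:functional_form_base} combined with $D_f(\mu)=D^{\textnormal{sym}}_f(\mu)$ for symmetric $\mu$, and you additionally supply the details the paper leaves implicit (that symmetry forces $\int\exp^-\,d\mu=1$, the change-of-measure identity giving $D_{\tilde f}(\mu)=D_f(\mu)$, and the infinite-value cases). One small repair is needed: the bound $\int e^{-r}\,d\mu\le 1-\mu(\{+\infty\})$ is false for general LLR distributions because $e^{-r}>1$ when $r<0$, but $\mu(\{+\infty\})=0$ follows immediately from \eqref{eq:def_symmetric_LLR} with $A=\{+\infty\}$, whose right-hand side vanishes since $-r\neq+\infty$ for every $r\in\mathbb{R}_{+\infty}$.
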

\begin{definition}
    We say that a probability measure $\mu$ is a two-point LLR distribution with support $\{s_1, s_2\} \subset \mathbb{R}$, $s_1 < 0 < s_2$, if 
    \[
    \mu(\{s_1\}) = \frac{1-e^{-s_2}}{e^{-s_1}-e^{-s_2}}, \quad \mu(\{s_2\}) = \frac{e^{-s_1}-1}{e^{-s_1}-e^{-s_2}}.
    \]
    Note that taking $s_1 = -r$, $s_2 = r$ and $r >0$ yields a symmetric LLR distribution.
\end{definition}
Setting $\nu$ and $\mu$ to be the two-point LLR distributions with support $\{s_1,s_2\}$ and $\{-r,r\}$, respectively, in \eqref{eq:semi-symmetrized} yields an analogue of Equation~\eqref{eq:key-equation}. To express this equation succinctly we introduce the following notation. Recall that $f$ denotes a convex function, $D_f^{\text{sym}}$ its induced symmetrized $f$-divergence, and $g$ the restriction of $D_f^{\text{sym}}$ to symmetric LLR distributions. We define the transformation $\phi$ as
\begin{equation}
\label{eq:linear_form}
    \phi^\gamma_g[f](s_1, s_2, r) \coloneqq  \frac{\rho(s_1)T_{r}[f](e^{-s_2})-\rho(s_2)T_{r}[f](e^{-s_1})}{\rho(s_1)- \rho(s_2)} - [1+\gamma g(r)] \frac{\rho(s_1)f(e^{-s_2}) - \rho(s_2)f(e^{-s_1})}{\rho(s_1)- \rho(s_2)},
\end{equation}
where 
\[
    T_r[f](x) \coloneqq \frac{f(xe^{-r}) + e^{-r}f(xe^r)}{1+e^{-r}}, \quad x \in (0,\infty), r \in [0, \infty),
\] 
and
\[
    \rho(s) \coloneqq e^{-s}-1, \quad s \in \mathbb{R}.
\] 
For a fixed $g$, $\phi_g^\gamma$ is a linear map from the space of continuous functions to the space of real-valued functions of $(s_1, s_2, r)$. This motivates us to consider \eqref{eq:semi-symmetrized} as a nonhomogeneous linear equation in $f$ for a fixed $g$. 
\begin{lemma}
Let $f: (0,\infty) \to \mathbb{R}$ define an $f$-divergence with tensorization parameter $\gamma$ and let $g: [0,\infty) \to [0,\infty)$ be its restriction to two-point symmetric LLR distributions. For any $s_1 < 0 < s_2$ and $r \geq 0$, we have
    \begin{equation}
    \label{eq:another-key-equation}
    \phi^\gamma_g[f](s_1, s_2, r)
    = g(r).
\end{equation}
\end{lemma}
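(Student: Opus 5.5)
The plan is to evaluate both sides of the identity \eqref{eq:semi-symmetrized} explicitly. Take $\nu$ to be the two-point LLR distribution with support $\{s_1,s_2\}$ and $\mu=\mu_r$ to be the symmetric two-point LLR distribution of Proposition~\ref{ex:symmetric-2-LLR}, with $\mu_0$ when $r=0$. After computing, I would rearrange the identity into the form \eqref{eq:another-key-equation}. All measures involved are finitely supported on $\mathbb{R}$, so every integral in \eqref{eq:def_f-divergence_LLR} is a finite sum. In particular, no finiteness or $+\infty$ issues arise, and the tensorization identity applies.

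\textbf{Step 1: compute $D_f(\nu)$.} I would rewrite the weights of $\nu$ through $\rho(s)=e^{-s}-1$:
\[
\nu(\{s_1\})=\frac{-\rho(s_2)}{\rho(s_1)-\rho(s_2)},\qquad \nu(\{s_2\})=\frac{\rho(s_1)}{\rho(s_1)-\rho(s_2)} .
\]
A one-line check gives $\int \exp^-\,d\nu=\nu(\{s_1\})e^{-s_1}+\nu(\{s_2\})e^{-s_2}=1$. Hence the $f'(\infty)$ term in \eqref{eq:def_f-divergence_LLR} vanishes, and
\[
D_f(\nu)=\frac{\rho(s_1)f(e^{-s_2})-\rho(s_2)f(e^{-s_1})}{\rho(s_1)-\rho(s_2)} .
\]
This is exactly the fraction multiplying $[1+\gamma g(r)]$ in \eqref{eq:linear_form}.

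\textbf{Step 2: compute $D_f(\nu*\mu_r)$.} The convolution $\nu*\mu_r$ is supported on the points $s_i\pm r$. The weight at $s_i+r$ is $\nu(\{s_i\})/(1+e^{-r})$, and the weight at $s_i-r$ is $\nu(\{s_i\})e^{-r}/(1+e^{-r})$. The exponential moment of a convolution is the product of the exponential moments, so $\int\exp^-\,d(\nu*\mu_r)=1$ and again there is no $f'(\infty)$ term. Since $f(\exp(-(s_i\pm r)))=f(e^{-s_i}e^{\mp r})$, grouping the two points coming from each $s_i$ gives
\[
D_f(\nu*\mu_r)=\sum_{i=1}^2\nu(\{s_i\})\,T_r[f](e^{-s_i})=\frac{\rho(s_1)T_r[f](e^{-s_2})-\rho(s_2)T_r[f](e^{-s_1})}{\rho(s_1)-\rho(s_2)} .
\]
For $r=0$ this reduces to $T_0[f]=f$, consistent with $\mu_0$ being a point mass.

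\textbf{Step 3: rearrange.} By definition $D_f^{\textnormal{sym}}(\mu_r)=g(r)$. Substituting the two expressions into \eqref{eq:semi-symmetrized} gives
\[
D_f(\nu*\mu_r)-[1+\gamma g(r)]D_f(\nu)=g(r),
\]
which is precisely $\phi^\gamma_g[f](s_1,s_2,r)=g(r)$. The only step needing care is the bookkeeping in Step 2: pairing the weights correctly with the arguments $e^{-s_i}e^{\mp r}$ so that $T_r[f]$ appears. I also need to confirm that the sign convention of the LLR, $\log(q/p)$ evaluated through $f\circ\exp^-$, matches the ordering of $e^{-r}$ and $e^{r}$ in $T_r$. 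I would verify this orientation on the symmetric case $s_1=-s_2$ as a sanity check against the formula for $g_i$ in the proof of Theorem~\ref{def:unique_sym}.
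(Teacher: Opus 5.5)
Your proposal is correct and follows the same route as the paper: you substitute the two-point LLR distribution $\nu$ on $\{s_1,s_2\}$ and the symmetric $\mu_r$ into \eqref{eq:semi-symmetrized}, compute $D_f(\nu)$ and $D_f(\nu*\mu_r)$ as finite sums (each has unit exponential moment, so there is no $f'(\infty)$ term), and rearrange. Your explicit bookkeeping of the weights, together with the $r=0$ check, supplies the computation that the paper states without writing out.
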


Remarkably, by the definition of symmetrized $f$-divergence, we already have
    \begin{align}
      T_r[f](1)=g(r).
\end{align}
Although this identity will be a convenient intermediate step in the proof, it is already implied by \eqref{eq:another-key-equation} when $f(1)=0$ and therefore need not be imposed as a separate condition.
 Accordingly, the rest of this section is concerned with identifying the solutions of \eqref{eq:another-key-equation} for fixed $g$ and $\gamma$. 
 
 Our next result presents a basis for $\ker \phi_g^\gamma$. The intuition behind our proof is that, for a fixed $g$, $\ker \phi_g^\gamma$ has dimension at most two. Once this upper bound is established, it suffices to construct two linearly independent elements of the kernel. Such elements can be obtained from functions already known to induce tensorizable $f$-divergences. We focus on the regime where the local characteristic $\ell$ and the tensorization parameter $\gamma$ satisfy $\ell \gamma \geq -\frac{1}{8}$ since this is the only regime admissible when $g$ is induced by a tensorizable $f$-divergence (see Theorem~\ref{thm:relevant-ell-gamma-case}).

\begin{theorem}
\label{thm:basis-kernel}
    Let $\gamma \in \mathbb{R}$, and take $g: [0,\infty) \to [0, \infty)$ as in Corollary~\ref{cor:explicit-form-g} with local characteristic $\ell \in [0,\infty)$ such that 
    \[
    \ell \gamma \geq -\frac{1}{8}.
    \]
    Consider $\phi_g^\gamma$ as a linear transformation on the vector space of continuous functions $h:(0, \infty) \to \mathbb{R}$ with $h(1) = 0$. Then, $\ker \phi_g^\gamma$ has dimension two. Furthermore, 
    \begin{equation}
    \label{eq:bases}
        \ker \phi_g^\gamma = \begin{cases}
            \operatorname{span}\left(\{x-1, (x+1)\log x\}\right), & \gamma \ell = 0, \\
            \operatorname{span}\left(\{x-1, \sqrt{x}\log x\}\right), & \gamma \ell = -\frac{1}{8}, \\
            \operatorname{span}(\{x-1, x^{\alpha} - x^{1-\alpha}\}), &  \gamma \ell > -\frac{1}{8}, \gamma\ell \neq 0, 
        \end{cases}
    \end{equation}
    where, in the last case, $\alpha$ is either root of the polynomial $x(x - 1) - 2\gamma \ell$.
\end{theorem}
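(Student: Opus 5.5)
The plan is to reduce membership in $\ker \phi_g^\gamma$ to a one-variable relation, bound the dimension of its solution space by two, and then exhibit two linearly independent elements. First I rewrite the kernel condition. Put $y_i = e^{-s_i}$, so that $\rho(s_i) = y_i - 1$, and for fixed $r$ set
\[
L_r(y) \coloneqq T_r[h](y) - [1+\gamma g(r)]\,h(y).
\]
Then $\phi_g^\gamma[h](s_1,s_2,r)=0$ says exactly that $\frac{L_r(y_1)}{y_1-1} = \frac{L_r(y_2)}{y_2-1}$ for all $y_2<1<y_1$. Hence there is a scalar $K(r)$ with $L_r(y) = K(r)(y-1)$ for $y \neq 1$, and by continuity also at $y=1$. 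So $h \in \ker\phi^\gamma_g$ if and only if, for some function $K$,
\begin{equation*}
h(xe^{-r}) + e^{-r}h(xe^{r}) = (1+e^{-r})\bigl([1+\gamma g(r)]\,h(x) + K(r)(x-1)\bigr), \qquad x>0,\ r\ge 0 .
\end{equation*}

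\textbf{Step 1 (elements of the kernel).} The function $x-1$ lies in the kernel because $T_r[x-1] = x-1$ and the two-point combination of an affine function vanishes. For $h=x^\alpha$ a direct computation gives
\[
T_r[x^\alpha](x) = x^\alpha\,\frac{e^{-\alpha r}+e^{-(1-\alpha) r}}{1+e^{-r}} .
\]
In the case $\gamma\ell>-\tfrac18$, $\gamma\ell\neq 0$, this multiplier equals $1+\gamma g(r)$ by Corollary~\ref{cor:explicit-form-g}, and it is invariant under $\alpha\mapsto 1-\alpha$. Therefore $x^\alpha$ and $x^{1-\alpha}$ both satisfy the relation with $K=0$, and so does $x^\alpha - x^{1-\alpha}$, which also vanishes at $1$.

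In the two degenerate cases the exponent is a double root or the limiting root $0$ (equivalently $1$). There I plan to verify $\sqrt{x}\log x$ (for $\gamma\ell=-\tfrac18$, $\alpha=\tfrac12$) and $(x+1)\log x$ (for $\gamma\ell=0$) by direct substitution. The heuristic is that these arise as $\partial_\alpha$ of $x^\alpha - x^{1-\alpha}$ at the degenerate root. For $\gamma=0$ the relevant $g$ is $g(r)=2\ell r\tanh(r/2)$, and the check produces a possibly nonzero $K(r)$.

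Linear independence in each case is clear, so $\dim\ker\phi^\gamma_g\ge 2$.

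\textbf{Step 2 (upper bound on the dimension).} I change variables to $u(t) = e^{-t/2}h(e^t)$. The relation then becomes a d'Alembert-type equation
\[
u(t+r)+u(t-r) = 2\cosh(r/2)\,[1+\gamma g(r)]\,u(t) + 2K(r)\cosh(r/2)\,\sinh(t/2)\cdot\text{(const)},
\]
with a forcing term proportional to $\sinh(t/2)$, which is the $u$-image of $x-1$. Next I establish regularity. Integrating the relation in $r$ against a smooth bump with nonzero integral of the multiplier expresses $u(t)$ (up to the forcing term, whose coefficient $\int K$ is handled the same way) as a convolution of $u$ with a smooth kernel. This shows $u\in C^\infty$, and consequently $K$ is smooth too.

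Differentiating twice in $r$ at $r=0$ then yields a second-order linear ODE
\[
u''(t) = \lambda\,u(t) + a\,\sinh(t/2),
\]
where $\lambda$ is determined by $\ell$ and $\gamma$ (through $g(r)\sim \ell r^2$) and $a$ is an unknown constant. The solution set of this ODE, with $a$ ranging over $\mathbb{R}$, is at most three-dimensional. The constraint $h(1)=0$, i.e.\ $u(0)=0$, removes one dimension, so $\dim\ker\phi^\gamma_g \le 2$. Combined with Step 1 this gives equality and the stated bases. The roots of the characteristic equation $\mu^2 = \lambda$ reproduce the exponents $\alpha-\tfrac12$ in the three regimes, which is a consistency check.

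\textbf{Expected main obstacle.} The delicate step is the regularity bootstrap in Step 2. The unknown $K(r)$ appears inside the equation, and I must pick a test function for which the multiplier $\cosh(r/2)[1+\gamma g(r)]$ integrates to something nonzero. I would first try bumps supported near $r=0$, where this multiplier is close to $1$. Everything else reduces to direct computation.
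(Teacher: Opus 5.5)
Your proof is correct. Step 1 coincides with the paper's membership checks, but your upper bound $\dim\ker\phi_g^\gamma\le 2$ takes a genuinely different route.

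\textbf{The paper's route.} The paper never isolates your scalar $K(r)$. It uses only two consequences of the kernel condition:
\begin{itemize}
\item the limit $s_1\to 0^-$, which gives $T_r[h](1)=0$, i.e.\ $h(e^{-nr})=-e^{-nr}h(e^{nr})$;
\item the special instances $(s_1,s_2)=(-r,nr)$, which give a recursion showing that on each geometric lattice $e^{r\mathbb{Z}}$ a kernel element is determined by $h(e^{r})$ and $h(e^{2r})$.
\end{itemize}
For any three kernel elements, the $3\times 3$ determinant of their values then vanishes on the dense set of lattice triples $(e^{ar},e^{br},e^{cr})$. By continuity it vanishes everywhere, which forces linear dependence. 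This argument needs only continuity of $h$ and no regularity of $g$ at all.

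\textbf{Your route.} You reduce the kernel condition to $T_r[h]-[1+\gamma g(r)]h\in\operatorname{span}\{x-1\}$ for each $r$. You then substitute $u(t)=e^{-t/2}h(e^{t})$, which produces a Wilson-type equation with forcing term $4K(r)\cosh(r/2)\sinh(t/2)$. Mollification and differentiation at $r=0$ finish the argument. The cost is the regularity bootstrap and the use of the fact that $1+\gamma g$ is $C^2$ near $0$, which holds for the explicit forms in Corollary~\ref{cor:explicit-form-g}.

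\textbf{What your route buys.} It yields an explicit ODE, $u''=(\tfrac14+2\gamma\ell)u+a\sinh(t/2)$, whose solutions with $u(0)=0$ are exactly the stated spans:
\begin{itemize}
\item $\sinh(t/2)$ gives $x-1$ up to scaling;
\item $e^{\pm(\alpha-1/2)t}$ give $x^{\alpha}$ and $x^{1-\alpha}$;
\item the double root at $\lambda=0$ gives $\sqrt{x}\log x$;
\item the resonance at $\lambda=\tfrac14$ gives $t\cosh(t/2)$, i.e.\ $(x+1)\log x$.
\end{itemize}
So the basis is derived from the equation rather than guessed from known tensorizable divergences.

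\textbf{One point to make explicit.} In the mollification step you left the integrability of $K$ vague. It is immediate: $K(r)=L_r(y_0)/(y_0-1)$ for any fixed $y_0\neq 1$, so $K$ is continuous in $r$, and it is smooth once $u$ is, given that $g$ is smooth.
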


Since $\phi_g^\gamma$ is linear in $f$, the difference between any two solutions of \eqref{eq:another-key-equation} lies in $\ker \phi_g^\gamma$. Hence, after fixing a particular solution $f_0$ to \eqref{eq:another-key-equation}, we can decompose any other solution $f$ as 
\[
    f = f_0 + h, \quad h \in \ker \phi_g^\gamma. 
\]
Combining this observation with Theorem~\ref{thm:basis-kernel}, we can identify all continuous solutions $f$ to \eqref{eq:another-key-equation} satisfying $f(1) = 0$, for any fixed local characteristic $\ell$ and tensorization parameter $\gamma$ such that $\ell \gamma \geq - \frac{1}{8}$. Note that, in light of Lemma~\ref{thm:relevant-ell-gamma-case}, this is the only admissible regime that can arise when $\ell$ and $\gamma$ come from a tensorizable $f$-divergence. Because every convex function $f$ that induces a tensorizable $D_f$ satisfies \eqref{eq:another-key-equation}, it only remains to impose convexity on these solutions to obtain the complete characterization below. 
\begin{theorem}
    \label{thm:characterization-tensorization}
    Let $D_f$ be an $f$-divergence with tensorization parameter $\gamma \in \mathbb{R}$ and local characteristic $\ell \in [0 ,\infty)$. If $\ell = 0$, there exists a real number $E$ such that
    \begin{equation}
        \label{eq:affine-f-solution}
        f(x) = E(x-1),\qquad x\in(0,\infty).
    \end{equation}
    Suppose now that $\ell >0$. If $\gamma = 0$, then there exist $A, B \in [0,\infty)$ and $E \in \mathbb{R}$ such that
    \begin{equation}
        \label{eq:additive-f-divergence}
        f(x)=A x\log x-B\log x + E(x-1),
        \quad x\in(0,\infty),
    \end{equation}
    where $A + B = 2\ell$. Otherwise, there exist $\alpha, E\in\mathbb{R}$ such that
    $\alpha(\alpha-1) = 2\gamma\ell$ and
    \begin{equation}
    \label{eq:non-additive-f-divergence}
        f(x)=\frac{x^\alpha-1}{\gamma} + E(x-1), \quad x \in (0,\infty).
    \end{equation}
\end{theorem}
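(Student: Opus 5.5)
The plan is to write every convex $f$ inducing $D_f$ as one explicit particular solution of \eqref{eq:another-key-equation} plus an element of $\ker\phi_g^\gamma$, and then impose $f(1)=0$ and convexity. By Theorem~\ref{thm:relevant-ell-gamma-case} we have $\gamma\ell\ge -\frac18$. By Theorem~\ref{thm:unique-g} and Corollary~\ref{cor:explicit-form-g}, $g$ is then one of the explicit functions in \eqref{eq:explicit-form-g}. Convex functions are continuous, and $f(1)=0$, so $f$ lies in the domain of $\phi_g^\gamma$ considered in Theorem~\ref{thm:basis-kernel}. Hence $f=f_0+h$ with $h\in\ker\phi_g^\gamma$ as in \eqref{eq:bases}, where $f_0$ is any particular solution of \eqref{eq:another-key-equation} for the same $(g,\gamma)$.

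\emph{Case $\ell=0$.} If $\gamma\neq0$, then $\gamma\ell=0$ forces $\alpha\in\{0,1\}$. In either case the first line of \eqref{eq:explicit-form-g} gives $g\equiv0$, and the third line also gives $g\equiv 0$ when $\gamma=0$. So $f_0=0$ is a particular solution, and the kernel is spanned by $x-1$ and $(x+1)\log x$. Thus $f=E(x-1)+C(x+1)\log x$. Now $(x+1)\log x$ is neither convex nor concave: its second derivative is $\frac{1}{x}-\frac{1}{x^2}$, which changes sign at $x=1$. Convexity therefore forces $C=0$, giving \eqref{eq:affine-f-solution}.

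\emph{Case $\ell>0$, $\gamma=0$.} Here $g(r)=2\ell r\tanh(r/2)$, so $\gamma\ell=0$ and the kernel is again $\mathrm{span}\{x-1,(x+1)\log x\}$. I take $f_0=2\ell\, x\log x$. This is legitimate because KL tensorizes additively, and its $g$ (computed from $g(r)=T_r[f](1)$) equals $2\ell r\tanh(r/2)$, so it solves \eqref{eq:another-key-equation}. Hence
\[
f=2\ell x\log x+C(x+1)\log x+E(x-1)=(2\ell+C)x\log x+C\log x+E(x-1).
\]
Set $A=2\ell+C$ and $B=-C$, so $A+B=2\ell$. Convexity requires $f''(x)=\frac{A}{x}+\frac{B}{x^2}\ge0$ for all $x>0$. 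Letting $x\to0$ and $x\to\infty$ forces $B\ge0$ and $A\ge0$, which gives \eqref{eq:additive-f-divergence}.

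\emph{Case $\ell>0$, $\gamma\neq0$.} Take $\alpha$ to be a root of $x(x-1)=2\gamma\ell$ and $f_0=\frac{x^\alpha-1}{\gamma}$. A direct computation gives $T_r[f_0](1)=\frac{1}{\gamma}\bigl(\frac{e^{-\alpha r}+e^{-(1-\alpha)r}}{1+e^{-r}}-1\bigr)$. The power divergence tensorizes with parameter $\gamma$, since for $f=c(x^\alpha-1)$ the map $D+c$ is multiplicative, giving $\tau=d_1+d_2+d_1d_2/c$, so $c=1/\gamma$. Thus $f_0$ solves \eqref{eq:another-key-equation}.

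\emph{Subcase $\gamma\ell>-\frac18$.} Here $\alpha\neq\frac12$ and the kernel is $\mathrm{span}\{x-1,x^\alpha-x^{1-\alpha}\}$, so
\[
f=\tfrac{1}{\gamma}(x^\alpha-1)+C(x^\alpha-x^{1-\alpha})+E(x-1).
\]
I rewrite this as $a(x^\alpha-1)+b(x^{1-\alpha}-1)$ plus an affine term, with $a=\frac1\gamma+C$ and $b=-C$. The two roots $\alpha$ and $1-\alpha$ play symmetric roles, and the identity $a+b=\frac1\gamma$ is preserved. The main obstacle is showing that convexity forces $ab=0$. We have
\[
f''=a\alpha(\alpha-1)x^{\alpha-2}+b\alpha(\alpha-1)x^{-\alpha-1}=2\gamma\ell\,(a x^{\alpha-2}+b x^{-1-\alpha}).
\]
Convexity alone only yields $\gamma a\ge0$ and $\gamma b\ge0$, which does not force $ab=0$. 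So I would also use the tensorization identity \eqref{eq:semi-symmetrized} on a non-symmetric two-point $\nu$. A mixture of two power divergences with exponents $\alpha$ and $1-\alpha$ tensorizes only if one coefficient vanishes. To show this, compare the multiplicative structure: $D+\frac1\gamma$ must be multiplicative, while $a\int p^\alpha q^{1-\alpha}+b\int p^{1-\alpha}q^\alpha$ is not multiplicative on products unless $ab=0$. This can be checked on Bernoulli pairs, using linear independence of the functions $t\mapsto\int p_t^\alpha q^{1-\alpha}$ and $t\mapsto\int p_t^{1-\alpha}q^{\alpha}$. Relabeling the root then gives \eqref{eq:non-additive-f-divergence}.

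\emph{Subcase $\gamma\ell=-\frac18$.} Here $\alpha=\frac12$ and the kernel contains $\sqrt{x}\log x$. The same multiplicativity argument, now using that $\sqrt{x}\log x$ contributes a term not of product form, kills its coefficient.
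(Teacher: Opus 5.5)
Your proposal is correct and follows the paper's proof. Both arguments take a particular solution plus the kernel from Theorem~\ref{thm:basis-kernel}, use convexity for the sign constraints, and apply the tensorization identity on Bernoulli pairs to force one of the two power-divergence coefficients to vanish when $\gamma\ell>-\frac{1}{8}$.

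The deviations are only local. For $\ell=0$ the paper argues directly from the equality case of Jensen: $g\equiv 0=f(1)$ makes $f$ affine on every $[e^{-r},e^{r}]$, so the kernel is not needed. For $\gamma\ell=-\frac{1}{8}$ the paper kills $\sqrt{x}\log x$ by convexity, since $f''(x)=-\frac{1}{4}x^{-3/2}\bigl(\frac{1}{\gamma}+c\log x\bigr)$ changes sign for $c\neq 0$. Your multiplicativity argument there also works, because the residual is $\gamma c^{2}J_1J_2$ with $J=\int\sqrt{pq}\,\log\frac{p}{q}$, but it is less direct.
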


The expressions for $f$ in Theorem~\ref{thm:characterization-tensorization} show every tensorizable $f$-divergence is either a linear combination, with nonnegative coefficients, of the KL and reverse KL divergences or a power divergence. We refer the reader to Appendix~\ref{appendix:sols-characterization-semi-sym} for the proofs of Theorems~\ref{thm:basis-kernel} and~\ref{thm:characterization-tensorization}. 

\section{Conclusion}
In this work, we characterized every tensorizable $f$-divergence in the sense of Definition~\ref{def:tensorization_base}. In particular, we showed that assuming the total discrepancy 
\[
D_f(P_1 \otimes \cdots \otimes P_n \Vert Q_1 \otimes \cdots \otimes Q_n)
\] 
decomposes solely as a function of $\{D_f(P_i \Vert Q_i)\}_{i = 1}^n$ forces $D_f$ to be either a linear combination, with nonnegative coefficients, of the KL and reverse KL divergences or a power divergence.

There have been several successful and general characterizations of additive functionals on spaces of probability measures \cite{mattner1999cumulants, mu2024monotone, MuXiaosheng2021FBDi}. For instance, Theorem 2 in \cite{MuXiaosheng2021FBDi}, together with Theorem~\ref{thm:functional_form_base}, yields an integral representation for any tensorizable $f$-divergence. However, our characterization does not follow immediately from the integral representation in \cite[Theorem 2]{MuXiaosheng2021FBDi}. We also emphasize that the definition of tensorization that we adopt is not framed in terms of additivity. Rather, one of our main contributions is to connect the formalism in \cite{CruzRodrigo2025Tof} to the more classical analysis of pseudo-additive functionals.

Additionally, the axiomatic characterization of information measures has a long history of identifying functionals of probability measures that satisfy compositional axioms through the study of functional equations \cite{HobsonArthur1969Anto, KannappanPl1973Msof, kannappan1974functional, csiszar2008axiomatic, ebanks1998characterization}. Hence, our analysis naturally overlaps with this literature once tensorization is distilled into a functional equation. However, the focus of this work is the tensorization property itself, the compositional rules it induces, and which \(f\)-divergences admit this property. To our knowledge, a self-contained and direct characterization of tensorization within the class of \(f\)-divergences is missing from the current literature, and providing such a characterization is the main contribution of this work.

Finally, we note that the class of tensorizable $f$-divergences is not closed under addition, even though $D_{f+g} = D_f + D_g$ defines a valid $f$-divergence and simplifies under product measures if $D_f$ and $D_g$ tensorize. However, Definition~\ref{def:tensorization_base} does not, in general, regard $D_{f+g}$ as tensorizable because its simplification rule depends on all marginal discrepancies $\{D_f(P_i \Vert Q_i)\}_{i = 1}^n$ and $\{D_g(P_i\Vert Q_i)\}_{i = 1}^n$ rather than only depending on $\{D_{f+g}(P_i\Vert Q_i)\}_{i = 1}^n$. Nevertheless, our notion of tensorization can serve as a building block for studying broader classes of $f$-divergences with compositional properties, such as linear combinations of tensorizable $f$-divergences. Characterizing such classes is a natural direction for future work.


\appendices
\section{The $f$-divergence and the Log-likelihood Ratio Distribution}
\label{appendix:measure-theoretic-details}
In this appendix, we present the measure-theoretic details that arise in our study of $f$-divergences using the LLR distribution. To begin with, we present a chain rule for Radon-Nikodym derivatives. This result ensures that the $f$-divergence and the log-likelihood ratio distribution are invariant to the choice of dominating measure for a pair $(P,Q)$. Note that we write $P \ll Q$ to denote that $P$ is absolutely continuous with respect to $Q$.
\begin{lemma}[\cite{FollandG.B1984Ra}]
\label{lem:Folland-chain-rule}
    Suppose that $\nu$,
    $\mu$, and $\lambda$ are $\sigma$-finite measures on the same measurable space with $\nu \ll \mu$ and $\mu \ll \lambda$. We have $\nu \ll \lambda$, and 
    \begin{equation}
        \label{eq:Folland}
        \frac{d\nu}{d\lambda} = \frac{d\nu}{d\mu} \frac{d\mu}{d\lambda} \quad \lambda - \text{a.e.} 
    \end{equation}
\end{lemma}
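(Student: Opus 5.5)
The plan is to prove the two claims separately: first absolute continuity, then the density identity by checking that both sides integrate to the same thing over every measurable set and invoking uniqueness of Radon–Nikodym derivatives. Absolute continuity is immediate. If $\lambda(A)=0$, then $\mu(A)=0$ because $\mu\ll\lambda$, and hence $\nu(A)=0$ because $\nu\ll\mu$. Since $\nu$, $\mu$ and $\lambda$ are all $\sigma$-finite, the Radon–Nikodym theorem gives nonnegative measurable densities $\frac{d\nu}{d\mu}$, $\frac{d\mu}{d\lambda}$ and $\frac{d\nu}{d\lambda}$, each unique up to a.e.\ equality with respect to the dominating measure.

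The key intermediate step is the change-of-measure formula
\[
\int h \, d\mu = \int h \, \frac{d\mu}{d\lambda} \, d\lambda
\]
for every nonnegative measurable $h$. I will prove it in the standard bootstrapping order. For an indicator $h=\mathbf{1}_B$ it is exactly the definition of $\frac{d\mu}{d\lambda}$. It extends to nonnegative simple functions by linearity. It then extends to arbitrary nonnegative measurable $h$ by choosing simple $h_n \uparrow h$ and applying the monotone convergence theorem on both sides; on the right this uses that $h_n \frac{d\mu}{d\lambda} \uparrow h\frac{d\mu}{d\lambda}$ because the density is nonnegative.

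Now fix a measurable set $A$ and apply the formula with $h = \mathbf{1}_A \frac{d\nu}{d\mu}$, which is nonnegative and measurable. This gives
\[
\nu(A) = \int_A \frac{d\nu}{d\mu}\, d\mu = \int_A \frac{d\nu}{d\mu}\frac{d\mu}{d\lambda}\, d\lambda .
\]
Thus the nonnegative function $\frac{d\nu}{d\mu}\frac{d\mu}{d\lambda}$ is a density of $\nu$ with respect to $\lambda$. By uniqueness of the Radon–Nikodym derivative, it coincides $\lambda$-a.e.\ with $\frac{d\nu}{d\lambda}$, which is \eqref{eq:Folland}.

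The only delicate point is this uniqueness step. It uses the $\sigma$-finiteness of $\lambda$ to exhaust the space by sets $E_k$ with $\lambda(E_k)<\infty$. It also uses the $\sigma$-finiteness of $\nu$: intersecting further with sets $F_j$ of finite $\nu$-measure makes both integrals finite on $E_k\cap F_j$. On such sets, comparing the two densities over $\{u>v\}$ and $\{u<v\}$ shows they agree a.e., and a countable union gives agreement $\lambda$-a.e.\ on the whole space.
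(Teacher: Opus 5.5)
Your proof is correct. The paper gives no proof of its own and simply cites Folland; your argument (bootstrapping $\int h\,d\mu=\int h\,\frac{d\mu}{d\lambda}\,d\lambda$ from indicators through simple functions via monotone convergence, applying it to $h=\mathbf{1}_A\frac{d\nu}{d\mu}$, and concluding by uniqueness of the Radon--Nikodym derivative) is exactly the standard proof given there, and your uniqueness step is valid, though slightly more than needed, since $\sigma$-finiteness of $\nu$ alone already suffices.
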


In this subsection, we provide the details needed to migrate from the $f$-divergence framework in Definition~\ref{def:f-divergence} to the log-likelihood ratio regime. In the following, we show that Definition \ref{def:LLR-distribution-RN} is sound; i.e., that $\hat{\mu}(P \Vert Q)$ is independent of the choice of dominating measure $\lambda$ using the chain rule for Radon-Nykodim derivatives (Lemma~\ref{lem:Folland-chain-rule}).

\begin{lemma}[Invariance of the log-likelihood ratio]
\label{lem:invariance_LLR_as}
Let $P$ and $Q$ be probability measures on the same measurable space $(E, \mathcal{E})$. Assume there exist $\sigma$-finite measures $\lambda_1, \lambda_2$ on $(E, \mathcal{E})$ such that $P,Q \ll \lambda_1$ and $P,Q \ll \lambda_2$, then, under the convention that $a/0 = +\infty$ for any $a \in (0, \infty)$, 
\begin{equation}
    \label{eq:equal_LLR}
    Q\left(\frac{q_1}{p_1} = \frac{q_2}{p_2}\right) = 1,
\end{equation}
where 
\[p_i \coloneqq \frac{dP}{d\lambda_i}, \quad q_i \coloneqq \frac{dQ}{d\lambda_i}, \;\, i \in \{1,2\}.\]
\begin{IEEEproof}
    Initially, assume that $\lambda_1 \ll \lambda_2$. Invoking Lemma \ref{lem:Folland-chain-rule}, 
    \[
    \frac{dP}{d\lambda_2} = \frac{dP}{d\lambda_1} \frac{d\lambda_1}{d\lambda_2}, \quad \frac{dQ}{d\lambda_2} = \frac{dQ}{d\lambda_1} \frac{d\lambda_1}{d\lambda_2} \quad \lambda_2 - \text{a.e.}
    \]
    The relation $Q \ll \lambda_2$ implies that, for any sets $E_1, E_2 \in \mathcal{E}$, if $\lambda_2(E_1) = 0$ and $\lambda_2(E_2) = 0$, then $Q(E_1) = 0= Q(E_2)$. Hence, the previous two equations hold $Q \;-$ a.s. Moreover, $\frac{d\lambda_1}{d\lambda_2} >0$, $Q\;-\text{a.s.}$, which follows directly from the fact that $\{\frac{d\lambda_1}{d\lambda_2} = 0 \}$ is a $\lambda_1$-null set and that $Q \ll \lambda_1$.
    Thus, $Q\; -$ a.s.,
    \begin{align*}
        \frac{q_1}{p_1} &= \frac{q_1 \, \frac{d\lambda_1}{d\lambda_2}}{p_1 \, \frac{d\lambda_1}{d\lambda_2}}\\
        &= \frac{q_2}{p_2},
    \end{align*}
    where the division is well-defined  since $Q(q_1=0) =0$. 

    If $\lambda_1 \not\ll \lambda_2$ and $\lambda_2 \not\ll \lambda_1$, we define an auxiliary measure $\lambda^* = \frac{1}{2}(\lambda_1 + \lambda_2)$ and follow the same approach as above to obtain the same conclusion.
\end{IEEEproof}
\end{lemma}

From Lemma~\ref{lem:invariance_LLR_as} follows that the LLR distribution $\hat{\mu}(P \Vert Q)$ is invariant to the chosen dominating measure $\lambda$.
\begin{corollary}[Invariance of the LLR distribution]\label{corollary:invariance_LLR_distribution}
    Let $P$ and $Q$ be two probability measures on the same measurable space $(E, \mathcal{E})$. 
    For any $\sigma$-finite measures $\lambda_1, \lambda_2$ on $(E, \mathcal{E})$ such that $P,Q \ll \lambda_1$ and $P,Q \ll \lambda_2$, we have 
    \begin{equation}
        \label{eq:equal_LLR_distr}
        Q \circ \left(\log \frac{q_1}{p_1}\right)^{-1} = Q\circ \left(\log \frac{q_2}{p_2}\right)^{-1},
    \end{equation}
    where $p_1, q_1, p_2, q_2$ are defined as in Lemma \ref{lem:invariance_LLR_as}.
    \begin{IEEEproof}
        It is widely known that, if two random variables are almost surely equal, then they share the same distribution. In our setting this fact translates to: if $\log \frac{q_1}{p_1} = \log \frac{q_2}{p_2}$, $Q\;- $ a.s., then 
        \[
        Q\left(\log \frac{q_1}{p_1} \in A\right) =  Q\left(\log \frac{q_2}{p_2} \in A\right), \quad \forall A \in \mathcal{B}(\mathbb{R}_{+\infty}).
        \]
        We finish the proof by noting this last statement is just another way of writing Equation \eqref{eq:equal_LLR_distr}.
    \end{IEEEproof}
\end{corollary}

We refer to all measures of the form $\hat{\mu}(P \Vert Q)$, for some $P$ and $Q$, as the class of all \emph{realizable} LLR distributions and denote it by $\mathscr{L}$. As reflected in previous literature (see, for example, \cite{yu2023ising}), $\mathscr{L}$ has a simpler characterization, namely: 
\begin{equation}
    \label{eq:set_of_LLR_dists}
    \mathscr{L} = \left\{\text{$\nu$ on $(\mathbb{R}_{+\infty}, \mathcal{B}(\mathbb{R}_{+\infty}))$}: \int \exp^- \, d\nu \leq 1\right\}.
\end{equation}
We provide a proof of this fact in our next lemma, and proceed to use this representation in the proof of Lemma~\ref{lem:L-closed-under-operations}.
\begin{lemma}
\label{lem:LLR-equivalent-defs}
    Given a probability measure $\nu$ on $(\mathbb{R}_{+\infty}, \mathcal{B}(\mathbb{R}_{+\infty}))$, there exist a measurable space $(E, \mathcal{E})$, probability measures $P$ and $Q$ on $(E, \mathcal{E})$ and a dominating measure $\lambda$ such that $P,Q \ll \lambda$ and 
    \[
    \nu = \left.\hat{\mu}(P \Vert Q)\right|_{\mathcal{B}(\mathbb{R}_{+\infty})} = \left.Q\circ \left(\log \frac{q}{p} \right)^{-1}\right|_{\mathcal{B}(\mathbb{R}_{+\infty})}.
    \]
    if and only if 
    \begin{equation}
        \label{eq:ineq-LLR-def}
        \int \exp^{-} \, d\nu \leq 1. 
    \end{equation}
    \begin{IEEEproof}
        Let $\nu$ be a probability measure on $(\mathbb{R}_{+\infty}, \mathcal{B}(\mathbb{R}_{+\infty}))$. On the one hand, assume that there exists a measurable space $(E, \mathcal{E})$, probability measures $P$ and $Q$ on $(E, \mathcal{E})$ and a dominating measure $\lambda$ such that $P,Q \ll \lambda$ and  $\nu = Q\circ \left(\log \frac{q}{p} \right)^{-1}$. Then, by definition of the push-forward measure $\nu$,
        \begin{align*}
            \int \exp^- \, d\nu &= \int_{\{q > 0\}} p \,d\lambda \\
            &= P(\{q > 0\}).
        \end{align*}
        Thus, $\int \exp^- \, d\nu \leq 1$. On the other hand, assume that $\int \exp^- \, d\nu \leq 1$. We will construct the appropriate measures $P$ and $Q$ over the extended real line $\overline{\mathbb{R}}$. Consequently, for $A \in \mathcal{B}(\overline{\mathbb{R}})$, let 
        \begin{align*}
            Q(A) &\coloneqq \nu(A\setminus\{-\infty\}),\\
            P(A) &\coloneqq \int_{A\setminus\{-\infty\}} \exp^-\,d\nu + \left(1-\int \exp^-\,d\nu\right)\mathbf{1}_A(-\infty).
        \end{align*}
        Define $\lambda(A) \coloneqq \frac{1}{2}\nu(A) + \frac{1}{2}\mathbf{1}_{A}(-\infty)$. 
        Thus, $P,Q \ll \lambda$. Furthermore, $\lambda \;- $ a.s.,
        \begin{align}
            q \coloneqq \frac{dQ}{d\lambda} &= 2 \mathbf{1}_{\mathbb{R}_{+\infty}}; \label{eq:lem-equiv-def-q-lambda-1}\\
            p \coloneqq \frac{dP}{d\lambda} &= 2 \exp^-\mathbf{1}_{\mathbb{R}_{+\infty}} + 2\left(1-\int \exp^-\,d\nu\right)\mathbf{1}_{\{-\infty\}},\label{eq:lem-equiv-def-q-lambda-2}
        \end{align}
        because $\int_A q \, d\lambda = Q(A)$ and $\int_A p \, d\lambda = P(A)$, for any $A$ in $\mathcal{B}(\overline{\mathbb{R}})$. Furthermore, $\log \frac{q}{p}: \overline{\mathbb{R}} \to \overline{\mathbb{R}}$ is the identity function on $\mathbb{R}_{+\infty}$. Hence, for any $A \in \mathcal{B}(\mathbb{R}\cup \{\infty\})$,
        \[
        Q\left(\log \frac{q}{p} \in A\right) = Q(A)= \nu(A).
        \]
        In other words, 
        $
        \nu = \left.Q\circ \left(\log \frac{q}{p} \right)^{-1}\right|_{\mathcal{B}(\mathbb{R}_{+\infty})}.
        $
    \end{IEEEproof}
\end{lemma}

Recall that we wish to study the behavior of $D_f: \mathscr{L} \to \mathbb{R}_{+\infty}$ under mixtures and convolutions of measures. We first note that $\mathscr{L}$ is closed
under mixtures.
\begin{lemma}
\label{lem:L-closed-under-operations}
    $\mathscr{L}$ is a convex set.
    \begin{IEEEproof}
        Let $\mu_1, \mu_2 \in \mathscr{L}$ and $\lambda \in [0,1]$. We have 
        \begin{align*}
            \int \exp^-\,d(\lambda \mu_1 + (1-\lambda)\mu_2) 
            &= \lambda\int \exp^-\,d\mu_1 + (1-\lambda)\int \exp^-\,d\mu_2 \leq 1.
        \end{align*}
        Hence, $\lambda \mu_1 + (1-\lambda)\mu_2$ is in $\mathscr{L}$. 
    \end{IEEEproof}
\end{lemma}

Similarly, using \eqref{eq:ineq-LLR-def} it is straightforward to verify that $\mathscr{L}$ is closed under the convolution operation. However, we provide a constructive argument for this fact, which details the connection between product measures and convolutions of LLR distributions. Under product measures, Radon-Nikodym derivatives factor into marginal densities, thus the LLR of a product measure against another is simply the sum of the marginal log-likelihood ratios:
\begin{lemma}
    We have 
    \begin{equation}
    \label{eq:LLR-dist-product-convolution}
    \hat{\mu}(P_1 \otimes P_2 \Vert Q_1 \otimes Q_2) = \hat{\mu}(P_1 \Vert Q_1) * \hat{\mu}(P_2 \Vert Q_2),
    \end{equation}
    for any pairs of probability measures $(P_1, Q_1)$ and $(P_2, Q_2)$ with each component defined on the same measurable space. 
    \begin{IEEEproof}
    Using the shorthand $p_i \coloneqq \frac{dP_i}{d\lambda_i}$ 
    and $q_i \coloneqq \frac{dQ_i}{d\lambda_i}$:
    \begin{equation}
    \label{eq:RN-product-rule}
    \log \frac{d(Q_1 \otimes Q_2)/ d(\lambda_1 \otimes \lambda_2)}{d(P_1 \otimes P_2)/d(\lambda_1 \otimes \lambda_2)}
    = \log \frac{q_1}{p_1} + \log \frac{q_2}{p_2}, \quad Q_1 \otimes Q_2 - \text{a.s.},
    \end{equation}
    where the sum on the right-hand side is interpreted as the function on 
    the product sample space: $(\omega_1,\omega_2) \mapsto \log \frac{q_1}{p_1}(\omega_1) + \log \frac{q_2}{p_2}(\omega_2)$. By Definition~\ref{def:convolution}, $\hat{\mu}(P_1 \Vert Q_1) * \hat{\mu}(P_2 \Vert Q_2)$ is exactly the push-forward of $Q_1 \otimes Q_2$ under such map.
    \end{IEEEproof}
\end{lemma}

Notice that Lemma~\ref{lemma:Df-df} follows immediately from Equation~\eqref{eq:LLR-dist-product-convolution}. 

\section{On the Formulation of Tensorization}
\label{appendix:domain-Df}
While $f$-divergences and their tensorization are commonly formulated over the proper class of all measurable spaces, in practice one may work within subclasses of measurable spaces that are closed under the product (e.g., finite/discrete spaces or Euclidean spaces with their Borel $\sigma$-algebras). 
Our main technical contributions (Sections~\ref{sec:tensorization-bases-are-affine}, \ref{sec:local-characteristic} and \ref{sec:main-characterization}) extend to any such subclass, since our proofs require only tensorization for finite-support discrete distributions, which always exist due to the closure assumption, unless the subclass contains no nontrivial measurable spaces. 

Admittedly, if $\mathcal{C}$ denotes a family of measurable spaces that is closed under products, and we take $\mathscr{L}(\mathcal{C})$ as the set of all LLR distributions induced by pairs of probability measures on some member of $\mathcal{C}$, then $\mathscr{L}(\mathcal{C}) \subseteq \mathscr{L}$, but it is not generally true that $\mathscr{L}(\mathcal{C}) = \mathscr{L}$; not every possible LLR distribution can be generated by considering two measures on an element of $\mathcal{C}$. However, this is not an issue as Lemma~\ref{lem:L-closed-under-operations} still holds for $\mathscr{L}(\mathcal{C})$.
\begin{lemma}
    Let $\mathcal{C}$ be a set of measurable spaces that is closed under product; that is, 
    \[\text{if } (E, \mathcal{E}), (F, \mathcal{F}) \in \mathcal{C}, \text{then } (E \times F, \mathcal{E} \otimes \mathcal{F}) \in \mathcal{C}.\]
    Let 
    \[\mathscr{L}(\mathcal{C}) = \{\hat{\mu}(P \Vert Q) \mid \text{$P$ and $Q$ are probability measures on $\mathcal{X}\in \mathcal{C}$} \}.\]
\end{lemma}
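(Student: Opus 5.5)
The statement breaks off before its conclusion. Judging from the sentence preceding it ("Lemma~\ref{lem:L-closed-under-operations} still holds for $\mathscr{L}(\mathcal{C})$"), I take the intended claim to be: $\mathscr{L}(\mathcal{C})$ is convex, and also closed under convolution, which the main arguments use as well. I assume $\mathcal{C}$ contains at least one nontrivial measurable space $(Z,\mathcal{Z})$, meaning one with a set $A\in\mathcal{Z}$ such that $\emptyset\neq A\neq Z$. This is the standing assumption of this appendix.

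\emph{Closure under convolution.} Let $\mu_i=\hat{\mu}(P_i\Vert Q_i)$ with $P_i,Q_i$ on $\mathcal{X}_i\in\mathcal{C}$. Equation~\eqref{eq:LLR-dist-product-convolution} gives $\mu_1*\mu_2=\hat{\mu}(P_1\otimes P_2\Vert Q_1\otimes Q_2)$. Since $\mathcal{X}_1\times\mathcal{X}_2\in\mathcal{C}$ by the closure hypothesis, $\mu_1*\mu_2\in\mathscr{L}(\mathcal{C})$.

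\emph{Closure under mixtures.} Fix $p\in[0,1]$ and $\mu_1,\mu_2$ as above. The plan is to realize $p\mu_1+(1-p)\mu_2$ on the space $Z\times\mathcal{X}_1\times\mathcal{X}_2$, which lies in $\mathcal{C}$ by applying closure twice.
\begin{enumerate}
\item Pick points $a\in A$ and $b\in Z\setminus A$. The Dirac measures $\delta_a,\delta_b$ are mutually singular, since they are separated by the set $A$. The factor $Z$ will serve as a coin selecting which marginal pair is active.
\item Define
\[
Q\coloneqq p\,\delta_a\otimes Q_1\otimes Q_2+(1-p)\,\delta_b\otimes Q_1\otimes Q_2,
\]
\[
P\coloneqq p\,\delta_a\otimes P_1\otimes Q_2+(1-p)\,\delta_b\otimes Q_1\otimes P_2.
\]
\item Take the dominating measure $\lambda=(\delta_a+\delta_b)\otimes\lambda_1\otimes\lambda_2$. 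Compute the densities separately on the two disjoint pieces $A\times\mathcal{X}_1\times\mathcal{X}_2$ and $A^c\times\mathcal{X}_1\times\mathcal{X}_2$. On the first piece $q/p=q_1(x_1)/p_1(x_1)$, because the common factors $p$ and $q_2$ cancel. On the second piece $q/p=q_2(x_2)/p_2(x_2)$.
\item Push $Q$ forward through $\log(q/p)$. The $a$-branch contributes $p\,\mu_1$, because $Q_2$ integrates out. The $b$-branch contributes $(1-p)\,\mu_2$. Hence $\hat{\mu}(P\Vert Q)=p\mu_1+(1-p)\mu_2$.
\end{enumerate}

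The main obstacle is step 3. One must check that the cancellations hold $Q$-almost surely, including on the sets where $p_i=0$ or $q_i=0$, under the convention $\log\frac{1}{0}=+\infty$. This also requires the invariance of the LLR with respect to the dominating measure, given by Lemma~\ref{lem:invariance_LLR_as}. To handle it, I would first fix versions of the densities on each piece. I would then argue, as in the proof of Lemma~\ref{lem:invariance_LLR_as}, that the set where the factor $q_2$ (respectively $q_1$) vanishes is $Q$-null. This makes the cancellation legitimate.

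A smaller point is the degenerate case $p\in\{0,1\}$: there the mixture is just $\mu_2$ or $\mu_1$, which already lies in $\mathscr{L}(\mathcal{C})$. It is also worth recording that the construction uses only a two-set partition of $Z$, so any nontrivial member of $\mathcal{C}$ suffices.
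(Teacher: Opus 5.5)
Your proof is correct. For convolution it follows the paper's route: Equation~\eqref{eq:LLR-dist-product-convolution} plus closure of $\mathcal{C}$ under products.

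For mixtures, the paper gives no proof. It only asserts that Lemma~\ref{lem:L-closed-under-operations} ``still holds'' for $\mathscr{L}(\mathcal{C})$. That lemma's proof relies on the characterization $\mathscr{L}=\{\nu:\int\exp^-\,d\nu\le 1\}$, which fails for $\mathscr{L}(\mathcal{C})$ in general. For example, when $\mathcal{C}$ is the class of finite spaces, only finitely supported laws arise. Moreover, convexity of $\mathscr{L}$ says nothing about convexity of a subset of it.

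Your coin construction on $Z\times\mathcal{X}_1\times\mathcal{X}_2$ is what actually establishes mixture closure in the restricted setting. It pairs $P_1\otimes Q_2$ against $Q_1\otimes Q_2$ on the $A$-branch and $Q_1\otimes P_2$ against $Q_1\otimes Q_2$ on the $A^c$-branch, so the inactive coordinate contributes zero LLR. Your step~3 correctly identifies what makes the cancellation rigorous: $\{q_2=0\}$ is null on the first branch and $\{q_1=0\}$ is null on the second. The $+\infty$ atoms are then inherited correctly from $\mu_1$ and $\mu_2$.

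Two minor remarks.
\begin{itemize}
\item Your nontriviality assumption costs nothing. If no member of $\mathcal{C}$ has a set other than $\emptyset$ and the whole space, every probability pair satisfies $P=Q$. Then $\mathscr{L}(\mathcal{C})\subseteq\{\delta_0\}$, which is trivially closed under both operations, so you could state that case explicitly.
\item You use $p$ both for the mixture weight and for the density of $P$ in steps~2--3. Rename one of them, since the cancellation statement ``the common factors $p$ and $q_2$ cancel'' is otherwise ambiguous.
\end{itemize}
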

Then, $\mathscr{L}(\mathcal{C})$ is closed under mixture and convolution. 

Thus, as long as $\mathcal{C}$ contains a measurable space with at least two events, we can induce every two-point LLR distribution using elements of $\mathcal{C}$, which is enough to span the whole range of $D_f$, i.e.,  
$
\mathcal{R}(D_f) = \{D_f(\mu) : \mu \in \mathscr{L}(\mathcal{C})\}.
$

\section{Existence of the Local Characteristic and Uniqueness of Symmetrized Tensorizable $f$-divergences}
This appendix proves the two technical results underlying Section~\ref{sec:local-characteristic}: existence of the local characteristic and characterization of the restriction $g$ from the local characteristic. The core mechanism is the ``doubling relation'' in Equation~\eqref{eq:recursive_relation}. This relation implies that the map $h: r \mapsto g(r)/r^2$ is approximately invariant under successive halvings. This yields convergence of $h(r)$ as $r\to 0^+$, and later implies that the restrictions of symmetrized $f$-divergences that share a local characteristic and a tensorization base must agree everywhere.

\subsection{Existence of the Local Characteristic}
\label{appendix:proof-local-characteristic}
Throughout this subsection, $g(r)$ denotes the evaluation of a fixed symmetrized tensorizable $f$-divergence $D^{\text{sym}}_f$ at the symmetric LLR distribution with support $\{-r,r\}$, where $g(r) = 0$ if $r = 0$. Next, we record a few results that ensure the existence of the limit 
\[
\lim_{r \to 0^+} \frac{g(r)}{r^2}. 
\]
The main idea is to show that along a sequence of dyadic numbers, the function $h:r\mapsto g(r)/r^2$ induces a Cauchy sequence, and then prove that $h$ converges to the limit of this sequence as $r \to 0^+$. For this purpose, we derive a recursive equation for $g$ along ``dyadic (halving) steps.''
\begin{lemma}
\label{lem:recursive_relation_g}
    Define \[T(r) \coloneqq \left(\frac{1}{1+\tanh^2 \left(\frac{r}{2}\right)}\right).\]
    There exists $\gamma \in \mathbb{R}$ such that
    \begin{equation}
        \label{eq:recursive_relation}
        g(2r) = 2T(r)\left[2g(r) + \gamma g^2(r)\right], \quad r \in [0, \infty).
    \end{equation}
    \begin{IEEEproof}
        For any $r \in [0,\infty)$, simply take $r_1 = r = r_2$ in \eqref{eq:key-equation}.
    \end{IEEEproof}
\end{lemma}

Our first milestone is to prove that the sequence $\{h(2^{-k}x)\}$ converges as $k \to \infty$ for all $x \in (0, \infty)$. We start by showing that $g = O(r) \,\; (r \to 0^+)$ by invoking the recursive equation in \eqref{eq:recursive_relation} and the fact that a convex function is locally Lipschitz; this will eventually ensure that $h$ is bounded near the origin. 

\begin{lemma}
\label{lem:g_is_O(r)}
    There exist positive numbers $L$ and $r_0$ such that
    \begin{equation}
    \label{eq:g_is_O(r)}
        g(r) \leq 4L\,r, \quad  \forall r \in [0,r_0]. 
    \end{equation}
    \begin{IEEEproof}
         Since $f$ is convex, it is locally Lipschitz around $1$; using that $f(1)=0$, there exist $L>0$ and $\delta\in(0,1)$ such that
        \[
        |f(x)|\leq L|x-1|,\quad x\in(1-\delta,1+\delta).
        \]
        If $r \in [0,\log 2]$, then $|e^{\pm r}-1|\leq 2r$. Thus, if we let $r_0\coloneqq \min\{\log(1+\delta),-\log(1-\delta)\}$, then $|f(e^{\pm r})|\leq 2Lr$ for $r \in [0,r_0]$. Note that $r \in [0,r_0]$ implies $r \in [0, \log 2)$. Thus, 
        \begin{align*}
            g(r) &= f(e^{-r}) \cdot \frac{1}{1 + e^{-r}} + f(e^{r}) \cdot \frac{e^{-r}}{1 + e^{-r}}  \\
            &\leq \lvert f(e^{-r}) \rvert + \lvert f(e^{r}) \rvert \\
            &\leq 4rL.        
            \end{align*}
    \end{IEEEproof}
\end{lemma}

\begin{lemma}
\label{lem:Cauchy-seq}
    Let $x \in (0, \infty)$. Define two sequences $\{r_k(x)\}$ and $\{h_k(x)\}$ via
    \begin{equation}
        r_k(x) \coloneqq 2^{-k}x \quad \text{and} \quad h_k(x) \coloneqq  h(r_k(x)), \quad k \in \mathbb{Z}_{\geq 0}.
    \end{equation}
    Either there exists $K \in \mathbb{Z}_{\geq 0}$ such that $\{\log h_k(x)\}_{k \geq K}$ is Cauchy, or $h_k(x) \to 0$.
    \begin{IEEEproof}
        First, note that Equation~\eqref{eq:recursive_relation} implies
        \begin{equation}
        \label{eq:h2r-hr-factorized}
            h(2r) = T(r)\,h(r)\left(1+\frac{\gamma}{2}g(r)\right), \quad \forall r \in (0, \infty). 
        \end{equation}
        Then, because $g(r) \to 0$ as $r\to 0^+$, there exists $\delta > 0$  such that $\lvert \gamma g(r) \rvert < 1$ for all $0 < r < \delta$, which ensures that $1+\frac{\gamma}{2}g(r) > 0$ for $0 < r < \delta$. Define $r^* \coloneqq \min(r_0, \delta)$, where $r_0$ is given by Lemma~\ref{lem:g_is_O(r)}. For a fixed $x \in (0,\infty)$, choose $K$ large enough such that $r_K(x) < r^*$. 
        By all the preceding, for every $k \geq K$
        \[
            h_{k-1}(x)= T(r_k(x))h_k(x)\left(1+\frac{\gamma}{2}g(r_k(x))\right),
        \]
        and $1+\frac{\gamma}{2}g(r_k(x)) > 0$. Consequently, if $h_{k^*}(x) = 0$ for some $k^* \geq K$, then $h_k(x) = 0$ for all $k \geq K$, in which case $h_k(x) \to 0$. Hence, suppose that $h_k(x) > 0$ for every $k \geq K$. Because $T$ is a positive function and $1+\frac{\gamma}{2}g(r_k(x)) > 0$ for all $k \geq K$, the quantity $\log h_k(x)$ is well-defined for every $k \geq K$ and Equation~\eqref{eq:h2r-hr-factorized} yields
        \begin{equation}
            |\log h_{k}(x) - \log h_{k-1}(x)| = \left|\log T(r_k(x)) + \log \left(1+\frac{\gamma}{2}g(r_k(x))\right)\right|
        \end{equation}
    Given that $0 < T \leq 1$, for all $k \in \{1, 2, \dots\}$:
    \begin{align*}
        \lvert \log\,T(r_k(x))\rvert &= \log \left(1+\tanh^2 \left(\frac{2^{-k}x}{2}\right)\right) \\
        &\leq 4^{-k-1}x^2,
    \end{align*}
    where this inequality follows from the elementary inequalities: 
    \[
    \log(x+1) \leq x \quad \text{and} \quad 0 \leq \tanh(x) \leq x,\quad \forall x \in (0, \infty).
    \]
    Recall that $r_K(x) < r_0$. Thus, invoking Lemma~\ref{lem:g_is_O(r)} and the fact that $\lvert \log (1 + x) \rvert \leq 2 \vert x \vert$ for all $ x \in (-\frac{1}{2}, \frac{1}{2})$, we obtain
    \begin{align*}
        \left\lvert \log \left(1+\frac{\gamma}{2}g(r_k(x))\right) \right\rvert &\leq \lvert \gamma \rvert g(r_k(x)) \\
        &\leq 4 \lvert\gamma\rvert L 2^{-k}x, \quad \forall k \geq K. 
    \end{align*}
    Hence, for every $m \geq K + 1$,
    \begin{equation}
    \label{eq:series-to-zero}
        0 \leq \sum_{k = m}^\infty \lvert \log\,T(r_k(x))\rvert \leq \frac{4^{-m}x^2}{3} \quad \text{and} \quad 0 \leq \sum_{k = m}^\infty \left\lvert \log \left(1+\frac{\gamma}{2}g(r_k(x))\right) \right\rvert \leq \lvert \gamma \rvert L 2^{-m+3}x.
    \end{equation}
    The triangle inequality implies
    \[
    \sum_{k = m}^\infty |\log h_{k}(x) - \log h_{k-1}(x)| \xrightarrow[]{m \to \infty} 0.
    \]
    Therefore, $\{\log h_k(x)\}_{k \geq K}$ is a Cauchy sequence and converges to some finite value.  
    \end{IEEEproof} 
\end{lemma}

By Lemma~\ref{lem:Cauchy-seq} and continuity of the exponential function, the sequence $\{h_k(x)\}$ converges for every $x \in (0 ,\infty)$. Thus, we denote 
\begin{equation}
    \lim_{k \to \infty} h_k(x) = \ell_x,\quad \forall x \in (0, \infty).
\end{equation}
From the proof of Lemma~\ref{lem:Cauchy-seq} we extract a uniform bound on the sequence $\{\log h_k(x)\}$ across all initial points in a specific open interval. 
\begin{lemma}
\label{lem:log-bound}
    There exist $r^* > 0$ and $C \in \mathbb{R}$ such that either $h(x)=\ell_x=0$ or 
    \begin{equation}
    \label{eq:log-bound}
        \left|\log h(x)-\log\ell_x\right| \leq  Cx, 
    \end{equation}
     for all $ x \in (0, r^*)$.
    \begin{IEEEproof}
    Let $r^*$ be defined as in the proof of Lemma~\ref{lem:Cauchy-seq}. Take $x \in (0, r^*)$ and following the same arguments as in the proof of Lemma~\ref{lem:Cauchy-seq}, we obtain
    \begin{equation}
    \label{eq:log-difference}
        \lvert  \log h(x) - \log h_k(x) \rvert \leq \frac{x^2}{12}  + 4 \lvert \gamma \rvert L \,x \leq \left(\frac{r^*}{12}  + 4 \lvert \gamma \rvert L \right)x,
    \end{equation}
    where $L$ is the constant given by Lemma~\ref{lem:g_is_O(r)}.
       The proof concludes by taking the limit of $k\rightarrow\infty$ and defining $C \coloneqq \frac{r^*}{12}+4\lvert\gamma\rvert L $.
    \end{IEEEproof}
\end{lemma}
\begin{remark}
By iterating the refinement argument used above, the right-hand side of \eqref{lem:log-bound} can be improved to $Cx^2$. We omit the details since this sharper bound is not needed for our proof.
\end{remark}
Our next step is to show that for every $x > 0$, the value $\ell_x$ agrees with every other limit $\ell_y$ whenever $y$ lies on the dense set \[
\mathcal{D}(x) \coloneqq \{nx/2^k : n\in \mathbb{Z}_{>0},\; k \in \mathbb{Z}_{\geq 0}\} \subseteq (0, \infty),\]
thereby establishing a common convergence limit, together with a uniform bound, a dense subset of $(0, \infty)$. 
We prove this fact by extending the relation
\begin{equation}
    \label{eq:same-limits-halving}
    \ell_x = \ell_{y}, \quad \forall  y \in \{x/2^n : n \in \mathbb{Z}\},
\end{equation}
with the aid of the following lemma. Note that \eqref{eq:same-limits-halving} follows immediately by noting that, for every $n \in \mathbb{Z}$, the sequences $\{h_k(x/2^n)\}$ and $\{h_k(x)\}$ agree after shifting indices appropriately. 
\begin{lemma}
\label{lem:quadratic-equation-limits}
    Let $r_1, r_2 \in (0, \infty)$ such that $r_1 > r_2$. Then, 
    \begin{equation}
    \label{eq:quadratic-limits}
        (r_1 + r_2)^2 \ell_{r_1 + r_2} + (r_1 - r_2)^2 \ell_{r_1 - r_2} = 2r_1^2\ell_{r_1} + 2r_2^2\ell_{r_2}. 
    \end{equation}
    \begin{IEEEproof}
         To ease notation, define 
        \begin{equation}
            c_+(r_1, r_2) = \frac{1+e^{-r_1-r_2}}{(1+e^{-r_1})(1+e^{-r_2})}, \quad c_{-}(r_1, r_2) = \frac{1+e^{r_2-r_1}}{e^{r_2}(1+e^{-r_1})(1+e^{-r_2})}.
        \end{equation}
    Clearly, both functions are continuous at $(0,0)$ and $c_{+}(0,0) = \frac{1}{2} = c_{-}(0,0)$. Take $k \in \mathbb{Z}_{>0}$ and $r_1, r_2 \in (0, \infty)$ such that $r_1 > r_2$. By Lemma~\ref{lem:g-r1+r2}, 
    \begin{multline}
    \label{eq:b4-quadratic}
        c_{+}(2^{-k}r_1, 2^{-k}r_2)g(2^{-k}(r_1 + r_2)) + c_{-}(2^{-k}r_1, 2^{-k}r_2)g(2^{-k}(r_1 - r_2)) \\ =g(2^{-k}r_1) + g(2^{-k}r_2) +\gamma g(2^{-k}r_1)g(2^{-k}r_2)
    \end{multline}
    Multiplying both sides of Equation~\eqref{eq:b4-quadratic} by $4^{k}$ and taking the limit as $k \to \infty$ yields \eqref{eq:quadratic-limits}.
    \end{IEEEproof}
\end{lemma}

\begin{lemma}
\label{lem:same-limits-rationals}
    Let $x \in (0, \infty)$. If
    \[\mathcal{D}(x) \coloneqq \{nx/2^k : n\in \mathbb{Z}_{>0},\; k \in \mathbb{Z}_{\geq 0}\},\]
    then
    \begin{equation}
        \label{eq:same-limits-dense}
        \ell_x = \ell_{y}, \quad \forall y \in \mathcal{D}(x).
    \end{equation}
    \begin{IEEEproof}
        Rewrite~\eqref{eq:same-limits-dense} as 
        \begin{equation}
        \label{eq:same-limits-rationals}
            \ell_x = \ell_{nx/2^k}, \quad \forall  n\in \mathbb{Z}_{>0},\; k \in \mathbb{Z}_{\geq 0}. 
        \end{equation}
        Our proof is by induction on $n$ for a fixed $k \in \mathbb{Z}_{\geq 0}$.
        Indeed, \eqref{eq:same-limits-rationals} holds for $n = 1$ due to \eqref{eq:same-limits-halving}. The case $n = 2$ follows from \eqref{eq:same-limits-halving} as well. Thus, assume \eqref{eq:same-limits-rationals} holds for all $ n < N+1$, with $N > 1$. Denote $r_1 = Nx/2^{k}$ and $r_2 = x/2^{k}$. By the induction hypothesis, 
        \[
            \ell_{r_1} = \ell_{r_2} = \ell_{r_1 - r_2} = \ell_x.
        \]
        Combining these equalities with Lemma~\ref{lem:quadratic-equation-limits}, specifically Equation~\eqref{eq:quadratic-limits}, yields
        \[
            \ell_{r_1 + r_2} = \ell_x. 
        \]
    Since $r_1 + r_2 = (N+1)x/2^{k}$, the previous equality closes the induction.
    \end{IEEEproof}
\end{lemma}

\begin{IEEEproof}\textbf{(Theorem~\ref{thm:sym_c})}
    First, note that Lemma~\ref{lem:log-bound}, together with Lemma~\ref{lem:same-limits-rationals}, implies a common $\ell_x$ for any choice of $x\in(0,r^*)$ that the following bound holds for all $y \in \mathcal{D}(x) \cap (0, r^*)$.
    \begin{equation}
    \label{eq:quartic-approx}
     h(y)\in\left[e^{-Cy}\ell_x, e^{Cy}\ell_x\right].
    \end{equation}
    Since $\mathcal{D}(x)$ is a dense subset, the above bound extends to any $y\in (0,r^*)$ by continuity of $h$ on $(0,\infty)$. Then, the theorem directly follows from sandwiching.
\end{IEEEproof}

\subsection{Uniqueness from the Local Characteristic}
\label{appendix:proof-of-unique-g}
We now prove that a solution of \eqref{eq:key-equation} with local characteristic can be uniquely identified using this quadratic rate and its tensorization parameter. The argument has two steps. First, we show that two solutions to \eqref{eq:key-equation} with the same local characteristic must agree on some neighborhood of the origin. Second, the recursive identity in \eqref{eq:recursive_relation} propagates this equality to $(0, \infty)$. 

\begin{definition}[Neighborhood of commonality]
    Given two functions $g_1,g_2: [0,\infty) \to \mathbb{R}$, we define 
    \begin{equation}
    \label{eq:neighborhood_of_commonality}
        \bm{\varepsilon}(g_1, g_2) = \sup \{x \in (0,\infty): \forall r \in [0,x), g_1(r) = g_2(r)\}.
    \end{equation}
\end{definition}

\begin{lemma}
\label{lem:neighborhood}
    Set $\gamma\in \mathbb{R}$ and $\ell \in [0, \infty)$. Let $g_1, g_2: [0,\infty) \to [0,\infty)$ be two continuous solutions to \eqref{eq:key-equation} with this $\gamma$. Suppose also that $g_1$ and $g_2$ share the same local characteristic $\ell$. Then 
    \begin{equation}
        \bm{\varepsilon}(g_1, g_2) > 0. 
    \end{equation}
    \begin{IEEEproof}
    Because $g_1$ and $g_2$ fulfill \eqref{eq:key-equation} for $\gamma \in \mathbb{R}$, taking $r_1 = r/2 = r_2$ in \eqref{eq:key-equation} yields
    \begin{equation}
    \label{eq:recursive-rel-g_i}
        g_i(r) = 2T\left(\frac{r}{2}\right)\left[2 g_i\left(\frac{r}{2}\right) + \gamma g_i^2\left(\frac{r}{2}\right)\right]; \quad \forall r \in [0,\infty), i \in\{1,2\}.
    \end{equation}
    For $r \in [0, \infty)$, denote $r_k = 2^{-k}r$. From \eqref{eq:recursive-rel-g_i}, for any integer $n \geq 1$, 
    \begin{equation}
    \label{eq:diff-g_1-g_2}
        \left\lvert \frac{g_1(r)}{r^2} - \frac{g_2(r)}{r^2} \right\rvert = \prod_{k = 1}^n\left\lvert T(r_k)\left[1 + \frac{\gamma}{2}(g_1(r_k) + g_2(r_k))\right]\right\rvert\; \left\lvert \frac{g_1(r_n)}{r_n^2} - \frac{g_2(r_n)}{r_n^2}\right\rvert.
    \end{equation}
    Given that $0 \leq T \leq 1$, then applying the triangle inequality to the right-hand side of \eqref{eq:diff-g_1-g_2} yields
    \begin{equation}
        \label{eq:ineq-g_1-g_2}
        \left\lvert \frac{g_1(r)}{r^2} - \frac{g_2(r)}{r^2} \right\rvert \leq \prod_{k = 1}^n \left[1 + \frac{\lvert \gamma \rvert}{2}(g_1(r_k) + g_2(r_k))\right]\; \left\lvert \frac{g_1(r_n)}{r_n^2} - \frac{g_2(r_n)}{r_n^2}\right\rvert, \quad \forall n \in \mathbb{Z}_{>0}.
    \end{equation}
    We argue that the right-hand side of \eqref{eq:ineq-g_1-g_2} can be made arbitrarily small. Given that
    \[
    \lim_{r \to 0^+}\frac{g_i(r)}{r^2} < \infty, \quad i \in \{1,2\},
    \]
    there exists $M >0$ and $r^* \in (0,\infty)$ such that $\vert g_i(r) \vert < r^2M$ for all $r < r^*$ and $i \in \{1,2\}$. The standard logarithmic inequality $\log(1+x) \leq x$ for $x > -1$, implies that the sequence 
    \[
    \prod_{k = 1}^n \left[1 + \frac{\lvert \gamma \rvert}{2}(g_1(r_k) + g_2(r_k))\right]
    \]
    is bounded from above whenever $r < r^*$. Moreover, because $g_1$ and $g_2$ share the same local characteristic, then 
    \begin{equation}
        \left\lvert \frac{g_1(r_n)}{r_n^2} - \frac{g_2(r_n)}{r_n^2}\right\rvert
    \end{equation}
    is arbitrarily small for large enough $n$. Hence, for any $\delta >0$, $\left\lvert \frac{g_1(r)-g_2(r)}{r^2}\right\rvert < \delta$ for every $r \in (0, r^*)$. Thus, $g_1 = g_2$ on the interval $(0, r^*)$. Since both function are continuous, we have that $g_1 = g_2$ on $[0,r^*)$. Consequently, $\bm{\varepsilon}(g_1, g_2) \geq r^* > 0$.
\end{IEEEproof}
\end{lemma}
\begin{IEEEproof}\textbf{(Theorem~\ref{thm:unique-g})}\,
    By Lemma~\ref{lem:neighborhood}, there exists $r^* \in (0, \infty)$ such that $g_1(r) = g_2(r)$ for any $r \in [0, r^*)$. Because $g_1$ and $g_2$ fulfill \eqref{eq:key-equation}, then we have 
    \[
    g_i(2r) = 2T(r)\left[2g_i(r) + \gamma g_i^2(r)\right], \quad r \in [0, \infty). 
    \]
    This relation implies that, in fact, $g_1(r) = g_2(r)$ for any $r \in [0,2^nr^*)$ and any $n \in \mathbb{Z}_{>0}$. Thus, $g_1 = g_2$. 
\end{IEEEproof}

\section{Solution Space of the Semi-Symmetrized Tensorization Equation and Characterization of Tensorizable $f$-Divergences}
\label{appendix:sols-characterization-semi-sym}
This appendix presents the technical details of Section~\ref{sec:main-characterization}, including the proofs of Theorems~\ref{thm:basis-kernel} and~\ref{thm:characterization-tensorization}. 

\begin{lemma}
    Let $f: (0, \infty) \to \mathbb{R}$, where $f(1) = 0$, be a continuous solution to Equation~\eqref{eq:another-key-equation} for a fixed $g: [0,\infty) \to [0, \infty)$ and $\gamma \in \mathbb{R}$. Then, 
    \begin{equation}
        \label{eq:consistency}
        T_r[f](1) = g(r), \quad \forall r \in [0, \infty). 
    \end{equation}
    Furthermore, for every $h \in \ker \phi_g^\gamma$ 
    \begin{equation}
    \label{eq:consistency-kernel}
        T_r[h](1) = 0, \quad \forall r \in [0,\infty).
    \end{equation}
    \begin{IEEEproof}
        If $f$ solves \eqref{eq:another-key-equation}, then $\phi_g^\gamma[f](s_1, s_2, r) = g(r)$ for all $s_1 < 0 < s_2$ and $r \geq 0$. By continuity of $f$ and the condition $f(1) = 0$, taking the limit as $s_1 \to 0^-$ yields 
        \[
        T_r[f](1) - (1+\gamma g(r))f(1) = T_r[f](1) = g(r), \quad r \in [0, \infty).
        \]
        The same argument yields $T_r[h](1) = 0$ for $r \geq 0$ and $h \in \ker \phi_g^\gamma$.
    \end{IEEEproof}
\end{lemma}

\begin{lemma}
\label{lem:dimension-log-grid}
    Fix $r > 0$ and define 
    \[
    e^{r\mathbb{Z}} \coloneqq \{e^{nr} : n \in \mathbb{Z}\} \quad \text{and} \quad
    \mathcal{K}_r \coloneqq \{h |_{e^{r\mathbb{Z}}}: h \in \ker \phi_g^\gamma\}.
    \]
    Then, $\dim \mathcal{K}_r \leq 2$. 
    \begin{IEEEproof}
        Let $h \in \ker \phi_g^\gamma$. We will inductively show that $h|_{e^{r\mathbb{Z}}}$ is completely determined by the two values $h(e^r)$ and $h(e^{2r})$. The initial cases are handled by \eqref{eq:consistency-kernel}. Specifically, from \eqref{eq:consistency-kernel}, we get
        \begin{equation}
            \label{eq:consistency-pos-neg}
             h(e^{-nr}) = -e^{-nr}h(e^{nr}), \quad \forall n \in\mathbb{Z}_{>0}.
        \end{equation}
        Hence, $h(e^{-r})$ and $h(e^{-2r})$ are determined by $h(e^r)$ and $h(e^{2r})$, respectively. Let $n \in \mathbb{Z}_{>0}$ and set $(s_1, s_2) = (-r,nr)$. Because $h \in \ker \phi_g^\gamma$:
        \begin{equation}
        \label{eq:inductive-log-kernel}
            \rho(-r) \left[\frac{h(e^{-(n+1)r}) + e^{-r}h(e^{-(n-1)r})}{1+e^{-r}}-(1+\gamma g(r))h(e^{-nr})\right] - \rho(nr)\left[\frac{e^{-r}h(e^{2r})}{1+e^{-r}}-(1+\gamma g(r))h(e^r)\right] = 0.
        \end{equation}
        Since $\rho(-r) \neq 0$, setting $n = 2$ in \eqref{eq:inductive-log-kernel} proves that $h(e^{-3r})$ is also determined by the tuple $(h(e^r), h(e^{2r}))$. Proceeding inductively, we get $h(e^{-nr})$ is recoverable from $(h(e^r), h(e^{2r}))$ for every $n > 0$. Equation~\eqref{eq:consistency-pos-neg} then extends this property to $h(e^{nr})$ for every $n > 0$. Finally, because $h(e^0) = 0$, we conclude that $h|_{e^{r\mathbb{Z}}}$ is determined by the two values $h(e^r)$ and $h(e^{2r})$. Hence, $\dim \mathcal{K}_r \leq 2$.
    \end{IEEEproof}
\end{lemma}

\begin{IEEEproof}\textbf{(Theorem~\ref{thm:basis-kernel})}
Let $u_1, u_2, u_3 \in \ker \phi_g^\gamma$ and $a,b,c \in \mathbb{Z}$. Fix $r > 0$, then by Lemma~\ref{lem:dimension-log-grid} we have 
\[
F(e^{ar}, e^{br}, e^{cr}) \coloneqq \det \begin{pmatrix}
    u_1(e^{ar}) & u_2(e^{ar}) & u_3(e^{ar}) \\
    u_1(e^{br}) & u_2(e^{br}) & u_3(e^{br}) \\
    u_1(e^{cr}) & u_2(e^{cr}) & u_3(e^{cr}) \\
\end{pmatrix} = 0. 
\]
The set $\{(e^{ar}, e^{br}, e^{cr}): r > 0; a,b,c \in \mathbb{Z}\}$ is dense in $(0,\infty)^3$ and $F: (0,\infty)^3 \to \mathbb{R}$ is continuous. Hence, $F(x,y,z) = 0$ for all $x,y,z \in (0,\infty)$. It follows that $\{u_1, u_2, u_3\}$ is linearly dependent. Since $u_1, u_2$ and $u_3$ were arbitrary elements of the kernel, we have $\dim \ker \phi_g^\gamma \leq 2$. 

We now prove that each basis listed in \eqref{eq:bases} is indeed a basis of $\ker \phi_g^\gamma$. Clearly, each set in \eqref{eq:bases} is linearly independent, so it remains only to verify that its elements belong to the kernel in the corresponding regime of $\gamma$ and $\ell$. First, notice that the map $h_0: x \mapsto x - 1$ is always in $\ker \phi_g^\gamma$ for every pair of valid $g$ and $\gamma \in \mathbb{R}$. Indeed, it is straightforward to check that $T_r[h_0] = h_0$. Combining this with the fact that $h_0(e^{-s}) = \rho(s)$, shows that $\phi_g^\gamma[h_0] = 0$ for any valid $g$ and $\gamma$. Now, suppose $\gamma \ell = 0$ and define $h_1: x \mapsto (x+1)\log x$. Corollary~\ref{cor:explicit-form-g} implies that $\gamma g = 0$. This fact, along with the equality 
\[
T_r[h_1](e^{-s})-h_1(e^{-s}) = \frac{r(1-e^{-r})}{1+e^{-r}}\rho(s), \quad \forall s \in \mathbb{R},
\]
yields that $\phi_g^\gamma[h_1] = 0$ if $\gamma \ell = 0$. Next, assume $\gamma\ell > -\frac{1}{8}$ and $\gamma \ell \neq 0$. Let $\alpha$ be any root of the polynomial $x(x-1) - 2\gamma \ell$ and let $h_2: x \mapsto x^\alpha - x^{1-\alpha}$. In this case, we have 
\[
T_r[h_2] = \frac{e^{-\alpha r}+e^{-(1-\alpha)r}} {1+e^{-r}}h_2 \quad \text{and}\quad 1+\gamma g(r) = \frac{e^{-\alpha r}+e^{-(1-\alpha)r}}
{1+e^{-r}}.
\]
Hence, $T_r[h_2] - (1+\gamma g(r))h_2 = 0$, and $\phi_g^\gamma[h_2] = 0$.  Finally, if $\gamma \ell = -\frac{1}{8}$, define $h_3: x \mapsto \sqrt{x}\log x$. Then, $T_r[h_3] = \frac{h_3}{\cosh(r/2)}$. By \eqref{eq:explicit-form-g}, if $\gamma\ell = -\frac{1}{8}$, we also have that $1 + \gamma g(r) = \frac{1}{\cosh(r/2)}$. Hence, $T_r[h_3] - (1+\gamma g(r))h_3 = 0$, and $h_3 \in \ker \phi_g^\gamma$.
\end{IEEEproof}

\begin{IEEEproof}\textbf{(Theorem~\ref{thm:characterization-tensorization})}
    Note that any solution to \eqref{eq:another-key-equation} is of the form $f = f_0 + h$, where $f_0$ is a particular solution and $h \in \ker \phi_g^\gamma$. We leverage this observation and Theorem~\ref{thm:basis-kernel} to derive an explicit expression for $f$. By Theorem~\ref{thm:relevant-ell-gamma-case}, we know $\ell \gamma \geq -\frac{1}{8}$. Thus, suppose $\ell = 0$. Corollary~\ref{cor:explicit-form-g} implies $g = 0$. Because 
    \[
    g(r) = \frac{1}{1+e^{-r}}f(e^{-r}) + \frac{e^{-r}}{1+e^{-r}}f(e^{r}) = 0 = f(1), \quad \forall r \in [0,\infty),
    \]
    the convex function $f$ is an affine map on $[e^{-r}, e^{r}]$, for all $r >0$. It follows that, for some $E \in \mathbb{R}$,
    \[
    f(x) = E(x-1), \quad \forall x \in (0,\infty). 
    \]
    Henceforth, assume $\ell > 0$. Now, if $\gamma = 0$, we take
    \[
    f_0(x)\coloneqq 2\ell x\log x,\qquad x\in(0,\infty),
    \] which is the generator of $2\ell$ times the KL divergence. By Theorem~\ref{thm:basis-kernel}, every solution $f$ to \eqref{eq:another-key-equation} is of the form
    \[
    f(x)=Ax\log x- B\log x + E(x-1), \quad x \in (0, \infty),
    \]
    where $A,B,E \in \mathbb{R}$ and $A + B = 2\ell$. Convexity then enforces $A, B \geq 0$ (this fact can be easily checked using the second derivative of $f$). Next, if $\gamma \neq 0$, we proceed in two parts. On the one hand, if $\gamma \ell = - \frac{1}{8}$, then $\alpha = \frac{1}{2}$. Take as the base solution to \eqref{eq:another-key-equation}
    \begin{equation}
        f_0(x) = \frac{x^{1/2}-1}{\gamma}, \quad x \in (0,\infty), 
    \end{equation}
    which has local characteristic $\ell$. Invoking Theorem~\ref{thm:basis-kernel}, we have that 
    \[
    f(x) =  \frac{x^{1/2}-1}{\gamma} + a\,\sqrt{x} \log x + E\,(x-1), \quad x \in (0,\infty),
    \]
    for some $a, E \in \mathbb{R}$. It is straightforward to check that $f$ can only be convex on $(0, \infty)$ if $a = 0$. On the other hand, if $\gamma\ell > -\frac{1}{8}$, we take $\alpha$ such that $\alpha(\alpha - 1) = 2\gamma \ell$. Let
    \[
    f_0(x) = \frac{x^\alpha - 1}{\gamma}, \quad x \in (0, \infty).
    \]
    Again, invoking Theorem~\ref{thm:basis-kernel}, we know there exists $A \in \mathbb{R}$ such that
    \[
    f(x) = A \frac{x^\alpha -1}{\gamma} + (1-A) \frac{x^{1-\alpha} - 1}{\gamma} + E(x-1), \quad x \in (0, \infty).
    \]
    If $A < 0$ or $A > 1$, it is always possible to find an interval where $f'' < 0$ since the coefficients $A$ and $1-A$ have different signs and are not zero. Thus, $A \in [0,1]$; furthermore, $A \in \{0,1\}$. To see this, let $D_{\alpha,1/\gamma}$ and $D_{1-\alpha, 1/\gamma}$ be the power divergences induced by $x \mapsto (x^\alpha -1)/\gamma$ and $x \mapsto (x^{1-\alpha} - 1)/\gamma$, respectively. Then, $D_f = A D_{\alpha,1/\gamma} + (1-A)D_{1-\alpha, 1/\gamma}$. Note that both $D_{\alpha,1/\gamma}$ and $D_{1-\alpha, 1/\gamma}$ have tensorization parameter $\gamma$. Combining this with the tensorization identity for $f$:
    \[
    D_f(P_1\otimes P_2 \Vert Q_1 \otimes Q_2) = D_f(P_1 \Vert Q_1) + D_f(P_2 \Vert Q_2) + \gamma D_f(P_1 \Vert Q_1)D_f(P_2 \Vert Q_2),
    \]
    yields
    \begin{equation}
    \label{eq:A-zero-or-one}
        A(1-A)(D_{\alpha,1/\gamma}(P_1 \Vert Q_1) - D_{1-\alpha, 1/\gamma}(P_1 \Vert Q_1))(D_{1-\alpha, 1/\gamma}(P_2 \Vert Q_2)-D_{\alpha,1/\gamma}(P_2 \Vert Q_2)) = 0
    \end{equation}
    for any probability measures $P_1, Q_1, P_2$ and $Q_2$ such that the involved values of the $f$-divergences above are finite. Because $\gamma \neq 0$, $\ell \neq 0$ and $\gamma \ell > -\frac{1}{8}$, we can always find measures $P$ and $Q$ such that $D_{\alpha,1/\gamma}(P \Vert Q) \neq D_{1-\alpha, 1/\gamma}(P \Vert Q)$ since $\alpha \notin \left\{0, 1, \frac{1}{2}\right\}$. In fact, for sufficiently small $\epsilon > 0$
    \[
        D_{\alpha,1/\gamma}(\operatorname{Bern}(1/2) \Vert \operatorname{Bern}(\epsilon)) \neq D_{1-\alpha, 1/\gamma}(\operatorname{Bern}(1/2) \Vert \operatorname{Bern}(\epsilon)), \quad \alpha \notin \left\{0, 1, \frac{1}{2}\right\}. 
    \]
    Hence, \eqref{eq:A-zero-or-one} forces $A \in \{0,1\}$. Note that if $\alpha$ satisfies $\alpha(\alpha-1) = 2\gamma \ell$, so does $1-\alpha$. Thus, without loss of generality we assume $A = 1$, yielding
    \[
    f(x) = \frac{x^\alpha - 1}{\gamma} + E\,(x-1), \quad x \in (0,\infty).
    \]
    Finally, using elementary properties of logarithms and exponents, together with linearity of the integral and the Tonelli--Fubini theorem, it is straightforward to verify that the functions $f$ derived above induce tensorizable $f$-divergences.
\end{IEEEproof}

\bibliographystyle{IEEEtran}
\bibliography{bibliography}

@book{kallenberg1997foundations,
  title={Foundations of modern probability},
  author={Kallenberg, Olav},
  year={2021},
  edition = {3},
  publisher={Springer}
}

@article{vajda1972f,
  title={On the f-divergence and singularity of probability measures},
  author={Vajda, Igor},
  journal={Periodica Mathematica Hungarica},
  volume={2},
  number={1-4},
  pages={223--234},
  year={1972}
}

@article{yu2023ising,
  title={Ising model on locally tree-like graphs: Uniqueness of solutions to cavity equations},
  author={Yu, Qian and Polyanskiy, Yury},
  journal={IEEE Transactions on Information Theory},
  volume={70},
  number={3},
  pages={1913--1938},
  year={2023},
  publisher={IEEE}
}

@article{gonccalves2024likelihood,
  title={Likelihood function: Definition, examples, and numerical experiments.},
  author={Gon{\c{c}}alves, Fl{\'a}vio B and Franklin, Pedro},
  journal={Chilean Journal of Statistics (ChJS)},
  volume={15},
  number={2},
  year={2024}
}

@article{LindleyD.V.1953SI,
author = {Lindley, D. V.},
issn = {0035-9246},
journal = {Journal of the Royal Statistical Society. Series B, Methodological},
number = {1},
pages = {30-76},
publisher = {Royal Statistical Society},
title = {Statistical Inference},
volume = {15},
year = {1953},
}

@article{HalmosPaulR.1949AotR,
author = {Halmos, Paul R. and Savage, L. J.},
journal = {The Annals of mathematical statistics},
number = {2},
pages = {225-241},
publisher = {Institute of Mathematical Statistics},
title = {Application of the Radon-Nikodym Theorem to the Theory of Sufficient Statistics},
volume = {20},
year = {1949},
}

@book{FollandG.B1984Ra,
publisher = {Wiley},
year = {1984},
title = {Real analysis: modern techniques and their applications},
address = {New York},
author = {Folland, Gerald B.},
}

@inproceedings{CruzRodrigo2025Tof,
pages = {1--6},
publisher = {IEEE},
booktitle = {IEEE International Symposium on Information Theory (ISIT)},
year = {2025},
title = {Tensorization of f-Divergences},
author = {Cruz, Rodrigo and Diaz, Mario and Calmon, Flavio P.}
}

@book{polyanskiy2025information,
  title={Information theory: From coding to learning},
  author={Polyanskiy, Yury and Wu, Yihong},
  year={2025},
  publisher={Cambridge university press}
}

@article{MuXiaosheng2021FBDi,
journal = {Econometrica},
pages = {475--506},
volume = {89},
publisher = {Wiley},
number = {1},
year = {2021},
title = {From Blackwell Dominance in Large Samples to Rényi Divergences and Back Again},
copyright = {2021 The Econometric Society},
author = {Mu, Xiaosheng and Pomatto, Luciano and Strack, Philipp and Tamuz, Omer}
}

@article{csiszar1967information,
  title={On information-type measure of difference of probability distributions and indirect observations},
  author={Csisz{\'a}r, Imre},
  journal={Studia Sci. Math. Hungar.},
  volume={2},
  pages={299--318},
  year={1967}
}

@inproceedings{balle2020hypothesis,
  title={Hypothesis testing interpretations and renyi differential privacy},
  author={Balle, Borja and Barthe, Gilles and Gaboardi, Marco and Hsu, Justin and Sato, Tetsuya},
  booktitle={International Conference on Artificial Intelligence and Statistics},
  pages={2496--2506},
  year={2020},
  organization={PMLR}
}

@inproceedings{abadi2016deep,
  title={Deep learning with differential privacy},
  author={Abadi, Martin and Chu, Andy and Goodfellow, Ian and McMahan, H Brendan and Mironov, Ilya and Talwar, Kunal and Zhang, Li},
  booktitle={Proceedings of the 2016 ACM SIGSAC conference on computer and communications security},
  pages={308--318},
  year={2016}
}

@inproceedings{mironov2017renyi,
  title={R{\'e}nyi differential privacy},
  author={Mironov, Ilya},
  booktitle={2017 IEEE 30th computer security foundations symposium (CSF)},
  pages={263--275},
  year={2017},
  organization={IEEE}
}

@inproceedings{mcsherry2009privacy,
  title={Privacy integrated queries: an extensible platform for privacy-preserving data analysis},
  author={McSherry, Frank D},
  booktitle={Proceedings of the 2009 ACM SIGMOD International Conference on Management of data},
  pages={19--30},
  year={2009}
}

@inproceedings{zhu2022optimal,
  title={Optimal accounting of differential privacy via characteristic function},
  author={Zhu, Yuqing and Dong, Jinshuo and Wang, Yu-Xiang},
  booktitle={International Conference on Artificial Intelligence and Statistics},
  pages={4782--4817},
  year={2022},
  organization={PMLR}
}

@article{mu2024monotone,
  title={Monotone additive statistics},
  author={Mu, Xiaosheng and Pomatto, Luciano and Strack, Philipp and Tamuz, Omer},
  journal={Econometrica},
  volume={92},
  number={4},
  pages={995--1031},
  year={2024},
  publisher={Wiley Online Library}
}

@article{mattner1999cumulants,
  title={What are cumulants?},
  author={Mattner, Lutz},
  journal={Documenta Mathematica},
  volume={4},
  pages={601--622},
  year={1999},
  publisher={Universi{\"a}t Bielefeld, Fakult{\"a}t f{\"u}r Mathematik Bielefeld}
}

@book{LeCamLucien1986AMiS,
series = {Springer Series in Statistics},
publisher = {Springer New York},
year = {1986},
title = {Asymptotic Methods in Statistical Decision Theory},
edition = {1},
author = {Le Cam, Lucien},
}

@book{TsallisConstantino2009ItNS,
publisher = {Springer New York},
year = {2009},
title = {Introduction to Nonextensive Statistical Mechanics : Approaching a Complex World},
edition = {1st ed.},
author = {Tsallis, Constantino}
}

@article{jizba2017uniqueness,
  title={On the uniqueness theorem for pseudo-additive entropies},
  author={Jizba, Petr and Korbel, Jan},
  journal={Entropy},
  volume={19},
  number={11},
  pages={605},
  year={2017},
  publisher={MDPI}
}

@article{leinster2019short,
  title={A short characterization of relative entropy},
  author={Leinster, Tom},
  journal={Journal of Mathematical Physics},
  volume={60},
  number={2},
  year={2019},
  publisher={AIP Publishing}
}

@article{furuichi2005uniqueness,
  title={On uniqueness theorems for Tsallis entropy and Tsallis relative entropy},
  author={Furuichi, Shigeru},
  journal={IEEE Transactions on Information Theory},
  volume={51},
  number={10},
  pages={3638--3645},
  year={2005},
  publisher={IEEE}
}

@article{suyari2004generalization,
  title={Generalization of Shannon-Khinchin axioms to nonextensive systems and the uniqueness theorem for the nonextensive entropy},
  author={Suyari, Hiroki},
  journal={IEEE Transactions on Information Theory},
  volume={50},
  number={8},
  pages={1783--1787},
  year={2004},
  publisher={IEEE}
}

@article{kakutani_equivalence_1948,
	title = {On {Equivalence} of {Infinite} {Product} {Measures}},
	volume = {49},
	language = {en},
	number = {1},
	journal = {The Annals of Mathematics},
	author = {Kakutani, Shizuo},
	year = {1948},
	pages = {214},
}

@article{yang_information-theoretic_1999,
	title = {Information-theoretic determination of minimax rates of convergence},
	volume = {27},
	number = {5},
	journal = {The Annals of Statistics},
	author = {Yang, Yuhong and Barron, Andrew},
	year = {1999},
}

@article{li2016renyi,
  title={R{\'e}nyi divergence variational inference},
  author={Li, Yingzhen and Turner, Richard E},
  journal={Advances in neural information processing systems},
  volume={29},
  year={2016}
}

@article{nowozin2016f,
  title={f-gan: Training generative neural samplers using variational divergence minimization},
  author={Nowozin, Sebastian and Cseke, Botond and Tomioka, Ryota},
  journal={Advances in neural information processing systems},
  volume={29},
  year={2016}
}

@book{kraft_conditions_1955,
	series = {University of {California} publications in statistics v. 2 no. 6},
	title = {Some conditions for consistency and uniform consistency of statistical procedures},
	publisher = {University of California Press},
	author = {Kraft, Charles H.},
	year = {1955},
}

@InProceedings{pmlr-v19-garivier11a,
  title = 	 {The KL-UCB Algorithm for Bounded Stochastic Bandits and Beyond},
  author = 	 {Garivier, Aurélien and Cappé, Olivier},
  booktitle = 	 {Proceedings of the 24th Annual Conference on Learning Theory},
  pages = 	 {359--376},
  year = 	 {2011},
  editor = 	 {Kakade, Sham M. and von Luxburg, Ulrike},
  volume = 	 {19},
  series = 	 {Proceedings of Machine Learning Research},
  address = 	 {Budapest, Hungary},
  month = 	 Jun,
  publisher =    {PMLR}
}

@article{10.1016/0196-8858,
author = {Lai, T.L and Robbins, Herbert},
title = {Asymptotically efficient adaptive allocation rules},
year = {1985},
issue_date = {March, 1985},
publisher = {Academic Press, Inc.},
address = {USA},
volume = {6},
number = {1},
issn = {0196-8858},
journal = {Adv. Appl. Math.},
month = mar,
pages = {4–22},
numpages = {19}
}

@inproceedings{10.5555/795662.796294,
author = {Auer, P. and Cesa-Bianchi, N. and Freund, Y. and Schapire, R. E.},
title = {Gambling in a rigged casino: The adversarial multi-armed bandit problem},
year = {1995},
isbn = {0818671831},
publisher = {IEEE Computer Society},
address = {USA},
booktitle = {Proceedings of the 36th Annual Symposium on Foundations of Computer Science},
pages = {322},
series = {FOCS '95}
}

@inproceedings{audibert:hal-00834882,
  TITLE = {{Minimax policies for adversarial and stochastic bandits}},
  AUTHOR = {Audibert, Jean-Yves and Bubeck, S{\'e}bastien},
  BOOKTITLE = {{Proceedings of the 22th annual conference on learning theory}},
  ADDRESS = {Montreal, Canada},
  PAGES = {217-226},
  YEAR = {2009},
  MONTH = Jun,
  HAL_ID = {hal-00834882},
  HAL_VERSION = {v1},
}

@ARTICLE{7055287,
  author={Duchi, John C. and Jordan, Michael I. and Wainwright, Martin J. and Wibisono, Andre},
  journal={IEEE Transactions on Information Theory}, 
  title={Optimal Rates for Zero-Order Convex Optimization: The Power of Two Function Evaluations}, 
  year={2015},
  volume={61},
  number={5},
  pages={2788-2806}
  }

@InProceedings{pmlr-v206-yu23a,
  title = 	 {Optimal Sample Complexity Bounds for Non-convex Optimization under Kurdyka-Lojasiewicz Condition},
  author =       {Yu, Qian and Wang, Yining and Huang, Baihe and Lei, Qi and Lee, Jason D.},
  booktitle = 	 {Proceedings of The 26th International Conference on Artificial Intelligence and Statistics},
  pages = 	 {6806--6821},
  year = 	 {2023},
  editor = 	 {Ruiz, Francisco and Dy, Jennifer and van de Meent, Jan-Willem},
  volume = 	 {206},
  series = 	 {Proceedings of Machine Learning Research},
  month = 	 Apr,
  publisher =    {PMLR}
}

@inproceedings{NEURIPS2024_b444ad72,
 author = {Yu, Qian and Wang, Yining and Huang, Baihe and Lei, Qi and Lee, Jason},
 booktitle = {Advances in Neural Information Processing Systems},
 editor = {A. Globerson and L. Mackey and D. Belgrave and A. Fan and U. Paquet and J. Tomczak and C. Zhang},
 pages = {99564--99600},
 publisher = {Curran Associates, Inc.},
 title = {Stochastic Zeroth-Order Optimization under Strongly Convexity and Lipschitz Hessian: Minimax Sample Complexity},
 volume = {37},
 year = {2024}
}

@InProceedings{pmlr-v195-foster23b,
  title = 	 {Tight Guarantees for Interactive Decision Making with the Decision-Estimation Coefficient},
  author =       {Foster, Dylan J. and Golowich, Noah and Han, Yanjun},
  booktitle = 	 {Proceedings of Thirty Sixth Conference on Learning Theory},
  pages = 	 {3969--4043},
  year = 	 {2023},
  editor = 	 {Neu, Gergely and Rosasco, Lorenzo},
  volume = 	 {195},
  series = 	 {Proceedings of Machine Learning Research},
  month = 	 Jul,
  publisher =    {PMLR}
}

@inproceedings{NEURIPS2024_8a23a95e,
 author = {Chen, Fan and Foster, Dylan J. and Han, Yanjun and Qian, Jian and Rakhlin, Alexander and Xu, Yunbei},
 booktitle = {Advances in Neural Information Processing Systems},
 editor = {A. Globerson and L. Mackey and D. Belgrave and A. Fan and U. Paquet and J. Tomczak and C. Zhang},
 pages = {75585--75641},
 publisher = {Curran Associates, Inc.},
 title = {Assouad, Fano, and Le Cam with Interaction: A Unifying Lower Bound Framework and Characterization for Bandit Learnability},
 volume = {37},
 year = {2024}
}

@book{ebanks1998characterization,
  title={Characterization of information measures},
  author={Ebanks, Bruce and Sahoo, Prasanna K and Sander, Wolfgang},
  year={1998},
  publisher={World Scientific}
}

@article{shannon1948mathematical,
  title={A mathematical theory of communication},
  author={Shannon, Claude Elwood},
  journal={The Bell system technical journal},
  volume={27},
  number={3},
  pages={379--423},
  year={1948},
  publisher={Nokia Bell Labs}
}

@article{csiszar2008axiomatic,
  title={Axiomatic characterizations of information measures},
  author={Csisz{\'a}r, Imre},
  journal={Entropy},
  volume={10},
  number={3},
  pages={261--273},
  year={2008},
  publisher={Molecular Diversity Preservation International}
}

@article{HobsonArthur1969Anto,
author = {Hobson, Arthur},
journal = {Journal of statistical physics},
number = {3},
pages = {383-391},
title = {A new theorem of information theory},
volume = {1},
year = {1969},
}

@article{KannappanPl1973Msof,
author = {Kannappan, Pl and Ng, C. T.},
issn = {0002-9939},
journal = {Proceedings of the American Mathematical Society},
number = {2},
pages = {303-310},
publisher = {American Mathematical Society},
title = {Measurable solutions of functional equations related to information theory},
volume = {38},
year = {1973},
}

@article{kannappan1974functional,
  title={On functional equations connected with directed divergence, inaccuracy and generalized directed divergence},
  author={Kannappan, Palaniappan and Ng, Che},
  journal={Pacific Journal of Mathematics},
  volume={54},
  number={1},
  pages={157--167},
  year={1974},
  publisher={Mathematical Sciences Publishers}
  }

\end{document}